\documentclass[12pt]{amsart}
\usepackage{amsmath, amssymb, geometry}
\usepackage[usenames,dvipsnames]{xcolor} 
\usepackage{tcolorbox} 
\usepackage{tabularx} 
\usepackage{array} 
\usepackage{colortbl} 
\tcbuselibrary{skins}
\usepackage[colorinlistoftodos]{todonotes}
\usepackage{hyperref}
\usepackage{bbm}
\graphicspath{ {./Figures/} }
\usepackage{nicefrac}
\usepackage{cleveref}

\geometry{portrait, margin=1in}
\title{Sum-Of-Squares To Approximate Knapsack}
\author{Pravesh K. Kothari and Sherry Sarkar\\Carnegie Mellon University}
\date{\today}

\theoremstyle{plain}
\newtheorem{thm}{Theorem}[section]
\newtheorem{lemma}[thm]{Lemma}
\newtheorem{cor}[thm]{Corollary}

\newtheorem{remark}{Remark}

\newcommand*\rr{\mathbb{R}}
\newcommand*\ex{\mathbb{E}}
\newcommand*\pr{\mathbb{P}}
\newcommand*\mc[1]{\mathcal{ #1 }}

\definecolor{debianred}{rgb}{0.84, 0.04, 0.33}

\theoremstyle{definition}
\newtheorem{defn}{Definition}[section]

\newcommand*\pex[1]{\Tilde{\mathbb{E}}_{ #1 }}
\newcommand*\rd{\text{Red}}
\newcommand*\pxd{\pex{D}}

\begin{document}

\begin{abstract}
These notes give a self-contained exposition of Karlin, Mathieu and Nguyen's~\cite{KMN} tight estimate of the integrality gap of the sum-of-squares semidefinite program for solving the knapsack problem. They are based on a sequence of three lectures in CMU course on Advanced Approximation Algorithms in Fall'21 that used the KMN result to introduce the Sum-of-Squares method for algorithm design. The treatment in these notes uses the pseudo-distribution view of solutions to the sum-of-squares SDPs and only rely on a few basic, reusable results about pseudo-distributions. 
\end{abstract}

\maketitle

\section{Approximating Knapsack}\label{intro}

\subsection{The Knapsack Problem} We're given $n$ items, each item $i$ accompanied with a capacity $c_i \geq 0$ and a value $v_i \geq 0$. We are allowed to pick a set of items with total capacity at most $C$, and the goal is to maximize the total value of the set of items we pick. In the following, we will use OPT to denote the maximum total value that we can collect by picking any subset of items that satisfy the capacity constraint. OPT can then be computed using the following integer program. 
\begin{align*}
    \max \quad &\sum_{i= 1}^n v_i x_i \\
    \sum_{i = 1}^n &c_i x_i \leq C \\
    x_i &\in \{0, 1\} \quad \forall i \in [n]. 
\end{align*}
The knapsack problem (and, in particular, the above integer program) is NP hard to solve exactly. 

By relaxing $x_i \in [0,1]$ from $x_i \in \{0,1\}$, we obtain a linear program. While this basic linear program does not give a good approximation ratio, it is easy to derive the following upper bound on the LP value that is meaningful if the maximum value of any item is not too large compared to the true optimum. 

\begin{lemma}
Let LP-VAL $= \max \sum_{i = 1}^n v_i x_i $ as $x$ varies over the set $\{| \sum_{i \leq n} c_i x_i \leq C, \text{ and } 0 \leq x_i \leq 1 \forall i\}$. Then,
\[ LP-VAL \leq OPT + \max_{i \in [n]} v_i. \]
\end{lemma}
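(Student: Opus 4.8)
The plan is to exploit the very simple polyhedral structure of the knapsack LP: I will show that its optimum is attained at a feasible point with at most one fractional coordinate, and then round that single coordinate down to obtain a feasible \emph{integral} point whose value is within $\max_{i} v_i$ of LP-VAL.

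First I would observe that the feasible region $P = \{x \in \rr^n : \sum_i c_i x_i \le C,\ 0 \le x_i \le 1\ \forall i\}$ is a nonempty bounded polytope (it contains the origin), so the linear objective $\sum_i v_i x_i$ attains its maximum at some vertex $x^\ast$ of $P$. A vertex of a polytope in $\rr^n$ is determined by $n$ linearly independent tight constraints; among the constraints defining $P$, only the capacity inequality $\sum_i c_i x_i \le C$ is not a box constraint, and for each coordinate $i$ at most one of $x_i \ge 0,\ x_i \le 1$ can be tight. Hence at least $n-1$ of the box constraints must be tight at $x^\ast$, which means at most one coordinate of $x^\ast$ — call it $j$ — lies strictly in $(0,1)$, while every other coordinate is $0$ or $1$.

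Next, define $\tilde x$ by $\tilde x_j = 0$ and $\tilde x_i = x^\ast_i$ for all $i \ne j$. Then $\tilde x \in \{0,1\}^n$, and since $c_j \ge 0$ we have $\sum_i c_i \tilde x_i \le \sum_i c_i x^\ast_i \le C$, so $\tilde x$ is feasible for the original integer program; therefore $\sum_i v_i \tilde x_i \le \mathrm{OPT}$. On the other hand, using $x^\ast_j \le 1$ and $v_j \ge 0$, we get $\sum_i v_i \tilde x_i = \sum_i v_i x^\ast_i - v_j x^\ast_j \ge \mathrm{LP\text{-}VAL} - v_j \ge \mathrm{LP\text{-}VAL} - \max_{i \in [n]} v_i$. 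Chaining the two inequalities yields $\mathrm{LP\text{-}VAL} - \max_i v_i \le \mathrm{OPT}$, which is the claim.

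The only genuinely delicate point — and the step I would be most careful to state precisely — is the vertex-counting claim that at most one coordinate is fractional; everything after it is routine rounding and arithmetic. If one prefers to avoid polyhedral language, the identical conclusion follows from the classical greedy description of the fractional knapsack optimum: order items by the ratio $v_i/c_i$, fill greedily until the capacity is exhausted, splitting at most the last item, and verify by a short exchange argument that this solution is LP-optimal; it again has at most one fractional coordinate, and one rounds that coordinate down as above.
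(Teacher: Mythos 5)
Your proof is correct; the paper states this lemma without proof, so there is no ``official'' argument to compare against. Your argument is the standard one (a vertex of the knapsack polytope has at most one fractional coordinate, and rounding that coordinate to zero loses at most $\max_i v_i$ in value while preserving feasibility), and it is exactly what a careful reader would fill in. One minor point worth making explicit for completeness: feasibility of the origin (hence nonemptiness of the polytope) relies on $C \ge 0$, which is implicitly assumed throughout.
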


In this note, we will analyze the integrality gap of a sequence of tighter semidefinite programs that can be mechanically built from the above integer program via the sum-of-squares method. Analyzing these relaxations is an excuse for us to introduce some basic methodology of dealing with the sum-of-squares SDP. In particular, instead of the traditional vector programming view of SoS SDPs, we will use the \emph{pseudo-distribution} view in these notes. This view is helpful in inspiring both new SDP relaxations and their analyses for several interesting problems in the past decade. While these applications are beyond the scope of these notes, we refer the reader to the notes from a CMU course~\cite{notes-lec, videos-lec} on this topic from Fall 2020 and a monograph~\cite{FKP19} that focuses on the connections of this method to proof complexity. 

These notes are organized as follows. In Section~\ref{pseudo}, we will start with a different inefficient (in fact, even writing solutions down for this relaxation would be inefficient) program to exactly compute solutions for knapsack. We will then consider a natural graded relaxation of this program whose solutions can be naturally viewed as relaxations of probability distributions over the hypercube $\{-1,1\}^n$ that we will define as pseudo-distributions. 

\section{Another Inefficient Program}\label{pseudo}
The program in Section~\ref{intro} asks for an integer assignment to the vector $x$ that satisfies the knapsack capacity constraint while maximizing the total value collected. 

We now consider a new program, that we will call, the \emph{distribution program}, that searches for probability distributions over integer solutions -- i.e., a probability distribution $\mu$ over $x \in \{-1,1\}^n$ maximizing the expected total value of items collected over all $\mu$ such that that every point in the support of $\mu$ satisfies the knapsack capacity constraint.

Pseudo-distributions are relaxations of probability distributions. While they can be defined on the solution space of any system of polynomial equalities and inequalities, in these notes, we will restrict to the hypercube $\{-1,1\}^n$. Note that this is the solution space to the system $\{x \mid x_i^2 = 1\}$. 


Knapsack, along with many other combinatorial problems (max-cut, etc.), can be reduced to finding a maximum of a polynomial of a hypercube. In particular, we are interested in the generalization (P)
\begin{gather*}
    \max \quad p(x) \\
    \text{s.t} \quad x \in \{-1, 1\}^n
\end{gather*}
where $p(x)$ is some polynomial. Consider the related problem (Q)
\begin{gather*}
    \max \quad \ex_{\mu}(p(x)) \\
    \text{s.t $\mu$ is a distribution over $\{-1, 1\}^n$ }.
\end{gather*}
Clearly (Q) is a relaxation of (P). However, notice that for any distribution $\mu$, 
\[ \ex_{\mu}(p(x)) \leq \max_{x \in \text{supp}(\mu)} p(x) \]
and hence a distribution with singular support maximizes $p(x)$. So, to solve (P) it is equivalent to find a probability distribution with high expectation instead.  

The classic way to think about a probability distribution is in terms of mass on elements in $\{-1, 1\}^n$. However, an equivalent way to think about probability distributions, is by the moments i.e., the expectation of monomials $X_S = \prod_{i \in S} x_i$. Defining a distribution via its support clearly defines the expectation of the monomials. The converse is true as well; namely, we may look at the indicator functions 
\[ \chi_y(x) := \prod_{i \mid x_i = 1} \left(\frac{1 + x_i}{2} \right) \prod_{i \mid x_i = -1} \left(\frac{1 - x_i}{2} \right) \]
and observe that for a fixed $y \in \{-1, 1\}^n$, 
\[ \pr_{\mu}(x = y) = \ex_{\mu}(\chi_y(x)). \]
Going back to our program (Q), once we have defined $\mu$ in terms of the monomials, we have a liner objective function. It is easy to constrain $\sum_{x \in \{-1, 1\}^n} \mu(x) = 1$ -- this is a single linear constraint. However, making sure all the moments are non-negative as they are in a probability distribution would require exponential constraints. Hence, we settle for a ``pseudo-distribution", or even better, a pseudo-distribution where some of the smaller moments are positive.

\begin{defn}[Pseudo-Distribution]
A pseudo-distribution over $\{-1, 1\}^n$ is a function $D: \{-1, 1\}^n \rightarrow \mathbb{R}$ such that $\sum_{x \in \{-1, 1\}^n} D(x) = 1$. 
\end{defn}

Unlike probability distributions, pseudo-distributions may take on negative values over the support. We define various other probability terms for pseudo-distributions. 

\begin{defn}[Pseudo-Expectation ]
The pseudo-expectation of a function $f: \{-1, 1\}^n \rightarrow \rr$ with respect to pseudo-distribution $D$ over $\{-1, 1\}^n$ is
\[ \pex{D}(f) := \sum_{x \in \{-1, 1\}^n} D(x) \cdot f(x). \]
\end{defn}

The pseudo-expectation of a probability distribution corresponds to the expectation of a probability distribution. 

\begin{lemma}[Linearity]
For two functions $f, g: \{-1, 1\}^n \rightarrow \mathbb{R}$ and a pseudo-distribution $D$ over $\{-1, 1\}^n$, 
\[ \pex{D}(f + g) = \pex{D}(f) + \pex{D}(g).\]
\end{lemma}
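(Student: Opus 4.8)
The plan is to simply unwind the definition of pseudo-expectation and invoke the linearity of a finite sum. First I would write $\pex{D}(f+g) = \sum_{x \in \{-1,1\}^n} D(x)\cdot (f+g)(x)$ by definition. Since $(f+g)(x) = f(x) + g(x)$ pointwise, the summand equals $D(x) f(x) + D(x) g(x)$ for each $x$. Then I would split the sum: because it ranges over the finite set $\{-1,1\}^n$ (of size $2^n$), the sum of the two terms equals the sum of each term separately, so $\sum_x \bigl(D(x)f(x) + D(x)g(x)\bigr) = \sum_x D(x) f(x) + \sum_x D(x) g(x)$. Recognizing the two resulting sums as $\pex{D}(f)$ and $\pex{D}(g)$ respectively finishes the argument.

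Organizationally, this is a one-line chain of equalities, and I would present it as a single display: $\pex{D}(f+g) = \sum_x D(x)(f(x)+g(x)) = \sum_x D(x) f(x) + \sum_x D(x) g(x) = \pex{D}(f) + \pex{D}(g)$. No property of $D$ beyond its being a real-valued function on a finite domain is used — in particular, the normalization $\sum_x D(x) = 1$ and the possible negativity of $D$ play no role here.

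There is no real obstacle: the only thing to be careful about is that every step is an identity of finite sums, so no convergence or reordering subtleties arise. If one wanted to be maximally pedantic, the single ``nontrivial'' ingredient is the commutativity and associativity of addition in $\mathbb{R}$ used to regroup the $2^n$ terms, which I would simply take for granted. The same template will also give homogeneity (scaling by a constant) if that is needed later, so it may be worth stating the combined linearity-with-scalars version, but for the lemma as stated the additive case above suffices.
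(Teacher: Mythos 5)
Your proof is correct and is exactly the expected argument; the paper in fact omits the proof entirely, treating the lemma as immediate from the definition of pseudo-expectation and linearity of finite sums, which is precisely what you wrote out.
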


Not only can we add two polynomials in a vector space, we can multiply two polynomials. Multiplication is a bi-linear operation. Indeed, we can check $\pex{D}(u \cdot v)$ is a bilinear form (in a vector space, this is analogous to a dot-product). This means that given a pseudo-distribution $D$, there exists a matrix $M$ such that 
\[ \pex{D}(u \cdot v) = u^T M v. \]
What does $M$ look like? We have $M_t(S, T) = \pex{D}(X_{S \triangle T}). $

\begin{defn}
A pseudo-distribution $D$ over $\{-1, 1\}^n$ is of degree $d$ if for all polynomials $f$ of degree at most $\nicefrac{d}{2}$, we have 
\[ \pex{D}(f^2) \geq 0. \]
\end{defn}

Hence, if $D$ is a pseudo-distribution of degree $d$, we know that for all polynomials $u$ of degree less than or equal to $\nicefrac{d}{2}$, $u^T M u \geq 0$. In particular, $M$ restricted to the vector space of polynomials of degree less than or equal to $\nicefrac{d}{2}$ is positive semi-definite. Note that $M$ is a $\binom{n}{\nicefrac{d}{2}} \times \binom{n}{\nicefrac{d}{2}}$ dimension matrix.

\begin{thm}[Cauchy-Schwartz]
Let $D$ be a pseudo-distribution of degree $2d$ on $\{-1, 1\}^n$. Then, for all polynomials of degree at most $d$, 
\[ \pex{D}(f \cdot g) \leq \sqrt{\pex{D}(f^2)} \cdot \sqrt{\pex{D}(g^2)}. \]
\begin{proof}
The proof relies on observing that a natural argument for establishing the usual Cauchy-Schwarz inequality extends, in general, to any positive semidefinite bilinear forms. We include it in full for completeness. 

Suppose first that $\pex{D}(f^2)=0$. Since $D$ is a degree $2d$ pseudo-distribution, for every $C > 0$, 
\[ \pex{D}\left(Cf - \frac{g}{C}\right)^2 \geq 0.\]
Expanding out, this yields that 
\[ 2\pex{D}(fg) \leq\pex{D}(C^2 f^2) + \pex{D}\left(\frac{g^2}{C^2} \right) = \frac{1}{C^2} \pex{D}(g^2). \]
Letting $C \rightarrow \infty$ gives $\pex{D}[fg] \leq 0$. A similar argument starting with $\pex{D}(Cf + g/C)^2 \geq 0$ yields that $-\pex{D}(fg) \leq 0$. Together, we conclude that $\pex{D}(fg)=0$ completing the proof. 

Let's now assume that $\pex{D}(f^2), \pex{D}(g^2) > 0$. In this case, let 
\[ \bar{f} = \frac{f}{\sqrt{\pex{D}(f^2)}}, \quad \bar{g} = \frac{g}{\sqrt{\pex{D}(g^2)}}. \] 
Then, we have $ \pex{D}(\bar{f}+\bar{g})^2\geq 0$ and thus, $\pex{D}(\bar{f}\bar{g})\leq \frac{1}{2} \pex{D}(\bar{f}^2 + \bar{g}^2) = 1$. Rearranging yields that $\pex{D}(fg)\leq \sqrt{\pex{D}(f^2)} \sqrt{\pex{D}g^2}$ as desired. A symmetric argument starting with $ \pex{D}(\bar{f}-\bar{g})^2\geq 0$ yields that $-\pex{D}(fg)\leq \sqrt{\pex{D}(f^2)} \sqrt{\pex{D}g^2}$. This completes the proof.

\end{proof}
\end{thm}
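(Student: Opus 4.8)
The plan is to imitate the classical proof of the Cauchy--Schwarz inequality, exploiting that the only ingredient that proof needs is the non-negativity of a quadratic form — which is exactly what ``$D$ is a degree-$2d$ pseudo-distribution'' buys us, since for any polynomial $h$ of degree at most $d$ we have $\pex{D}(h^2)\geq 0$. The one place this differs from a genuine inner-product argument is that $\pex{D}$ may be \emph{degenerate}: $\pex{D}(f^2)$ can vanish without $f$ vanishing. So I would split into two cases, and I expect the degenerate case to be the only mildly delicate point.

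In the main case, where $\pex{D}(f^2)>0$ and $\pex{D}(g^2)>0$, I normalize by setting $\bar f = f/\sqrt{\pex{D}(f^2)}$ and $\bar g = g/\sqrt{\pex{D}(g^2)}$, noting that $\bar f,\bar g$ still have degree at most $d$, so $\bar f + \bar g$ does too. Then $\pex{D}\big((\bar f+\bar g)^2\big)\geq 0$ is a legitimate application of the degree-$2d$ property; expanding via linearity gives $2\pex{D}(\bar f\bar g)\leq \pex{D}(\bar f^2)+\pex{D}(\bar g^2)=2$, which rearranges to $\pex{D}(fg)\leq \sqrt{\pex{D}(f^2)}\sqrt{\pex{D}(g^2)}$, the claimed bound. (Starting instead from $\pex{D}\big((\bar f-\bar g)^2\big)\geq 0$ yields the matching $-\pex{D}(fg)\leq \sqrt{\pex{D}(f^2)}\sqrt{\pex{D}(g^2)}$, i.e.\ the two-sided inequality, though only the stated direction is required.)

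In the degenerate case, say $\pex{D}(f^2)=0$, normalization is unavailable, so I instead use a scaling parameter: for every $C>0$, the polynomial $Cf - g/C$ has degree at most $d$, so $\pex{D}\big((Cf-g/C)^2\big)\geq 0$; expanding gives $2\pex{D}(fg)\leq C^2\pex{D}(f^2)+\frac{1}{C^2}\pex{D}(g^2)=\frac{1}{C^2}\pex{D}(g^2)$, and sending $C\to\infty$ forces $\pex{D}(fg)\leq 0 = \sqrt{\pex{D}(f^2)}\sqrt{\pex{D}(g^2)}$. The case $\pex{D}(g^2)=0$ is symmetric. The only thing to be careful about throughout is the degree bookkeeping — since $f,g$ have degree at most $d$, every square that appears has degree at most $2d$, so each invocation of the pseudo-distribution axiom is valid — and remembering to peel off the degenerate case rather than dividing blindly. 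Beyond that I anticipate no real obstacle: the argument is a purely formal manipulation once the pseudo-distribution definitions are in hand.
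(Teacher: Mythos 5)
Your proposal is correct and follows essentially the same route as the paper: split on whether $\pex{D}(f^2)$ (or $\pex{D}(g^2)$) vanishes, handle the degenerate case by scaling with a parameter $C$ and sending $C\to\infty$, and handle the nondegenerate case by normalizing and expanding $\pex{D}\bigl((\bar f\pm\bar g)^2\bigr)\geq 0$. The only cosmetic differences are that you stop at the one-sided bound $\pex{D}(fg)\leq 0$ in the degenerate case (the paper also derives $\pex{D}(fg)=0$, which is more than the stated inequality requires) and that you flag the symmetric subcase $\pex{D}(g^2)=0$ explicitly, which the paper leaves implicit.
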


Note that the indicator function 
\[ \chi_y(x) := \prod_{i \mid x_i = 1} \left(\frac{1 + x_i}{2} \right) \prod_{i \mid x_i = -1} \left(\frac{1 - x_i}{2} \right) \]
is a degree $n$ polynomial. 

\begin{cor}
A pseudo-distribution of degree $2n$ is simply a probability distribution. 
\end{cor}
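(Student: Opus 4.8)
The plan is to establish that the two classes coincide by proving each containment. The easy direction is that every probability distribution is a pseudo-distribution of degree $2n$: if $D = \mu$ is an honest probability distribution then $\sum_{x} \mu(x) = 1$, and for any polynomial $f$ we have $\pex{D}(f^2) = \ex_{\mu}(f(x)^2) \geq 0$ since $f(x)^2 \geq 0$ pointwise and $\mu(x) \geq 0$ for all $x$. So the content is entirely in the reverse direction.

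For the reverse direction, let $D$ be a pseudo-distribution of degree $2n$. Since the normalization $\sum_{y} D(y) = 1$ is already built into the definition of a pseudo-distribution, it suffices to show that $D(y) \geq 0$ for every $y \in \{-1,1\}^n$; nonnegativity together with normalization is exactly what it means for $D$ to be a probability distribution. The key object is the indicator polynomial $\chi_y$, which, as noted just before the corollary, has degree $n$. First I would record the crucial pointwise identity: for every $x \in \{-1,1\}^n$ we have $\chi_y(x) \in \{0,1\}$, with $\chi_y(x) = 1$ iff $x = y$, and hence $\chi_y(x)^2 = \chi_y(x)$ as functions on the hypercube. Consequently, using linearity and the definition of $\pex{D}$ as a weighted sum over hypercube points,
\[ \pex{D}(\chi_y^2) = \pex{D}(\chi_y) = \sum_{x \in \{-1,1\}^n} D(x)\, \chi_y(x) = D(y). \]

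Now I would invoke the degree hypothesis: since $\chi_y$ is a polynomial of degree $n = (2n)/2$, the definition of a degree-$2n$ pseudo-distribution gives $\pex{D}(\chi_y^2) \geq 0$, and combining this with the displayed identity yields $D(y) \geq 0$. As $y$ was arbitrary, $D$ is nonnegative everywhere, and together with $\sum_y D(y) = 1$ this says precisely that $D$ is a probability distribution.

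The one point that requires a little care is the distinction between a formal polynomial and the function it induces on $\{-1,1\}^n$: the identity $\chi_y^2 = \chi_y$ holds for the induced functions rather than for the formal polynomials, but since $\pex{D}$ is defined purely in terms of the values of a function on the hypercube, this is exactly the notion we need. Beyond that, there is essentially no obstacle — the argument is a one-line consequence of the ``square trick'' once one observes that $\deg \chi_y = n$ exactly matches the available degree budget $2n/2$.
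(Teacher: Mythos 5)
Your argument is correct and is exactly the one the paper intends: the remark immediately preceding the corollary that $\chi_y$ has degree $n$ is there precisely so that $D(y) = \pex{D}(\chi_y) = \pex{D}(\chi_y^2) \geq 0$ follows from the degree-$2n$ hypothesis. Your write-up just makes the implicit proof explicit, including the routine converse direction, so there is nothing to flag.
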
 

\subsection{A Brief Introduction to SDPs}

Classical linear programming is often of the form 
\begin{align*}
    \max \quad &c^Tx \\
    Ax &\leq b \\
    x &\geq 0,
\end{align*}
We are maximizing $c^Tx$ over the cone of non-negative vectors, subject constraints $Ax \leq b$. This is a special instance of conic programming. Similarly, semi-definite programs are trying to maximize a function over the cone of positive semi-definite matrices. 

\begin{align*}
    \max \quad C &\bullet X \\
    \mc{A}(x) &\leq b \\
    x &\succeq 0,
\end{align*}
where $C \in \mathbb{S}^n$, $\mc{A}: \mathbb{S}^n \rightarrow \mathbb{R}^m$, and $\bullet$ refers to the Frobenius dot product. 

As mentioned in the beginning of this survey, we often think about distributions as defined by their moments. We introduce the natural generalisation to pseudo-distributions. 
\begin{defn}[Pseudo-moments]
The degree $t$ pseudo-moments of a pseudo-distribution $D$ are the set of numbers 
\[ \{ \pex{D}(X_S) \mid |S| \leq t \} \]
i.e., the pseudo-expectations of monomials of degree at most $t$.
\end{defn}

Now we have all the tools to understand how to guarantee a pseudo-distribution has degree $2t$ via semi-definite programming. 

\begin{lemma}
The set $\{ \gamma_S \mid |S| \leq 2t \}$, where $\gamma_{\emptyset} = 1$, are the pseudo-moments of a degree $2t$ pseudo-distribution $D$ on $\{-1, 1\}^n$ if and only if the matrix $M_t$ defined by 
\[ M_t(S,T) = \pex{D}(X_{S \triangle T}) \]
is positive semi-definite. 
\end{lemma}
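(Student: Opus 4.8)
The plan is to read $M_t$ as the matrix whose rows and columns are indexed by subsets $S \subseteq [n]$ with $|S| \le t$ and whose entries are $M_t(S,T) = \gamma_{S \triangle T}$; this is well defined because $|S \triangle T| \le |S| + |T| \le 2t$, and when a degree-$2t$ pseudo-distribution $D$ with pseudo-moments $\gamma_S$ exists one has $M_t(S,T) = \pex{D}(X_{S \triangle T})$ as written in the statement. I would prove the two implications separately. The only ingredient beyond linear algebra is the identity $X_S \cdot X_T = X_{S \triangle T}$ \emph{as functions on} $\{-1,1\}^n$ (since $x_i^2 = 1$ there), combined with the fact that $\pex{D}$ is a linear functional on functions $\{-1,1\}^n \to \rr$, so it assigns the same value to any two polynomials that agree on the cube.

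For the forward direction, suppose $D$ is a degree-$2t$ pseudo-distribution with $\pex{D}(X_S) = \gamma_S$ for all $|S| \le 2t$. Take any polynomial $f$ of degree at most $t$ and write it in the monomial basis as $f = \sum_{|S| \le t} \hat f_S X_S$ with coefficient vector $\hat f$. On the cube, $f^2 = \sum_{|S|,|T| \le t} \hat f_S \hat f_T X_S X_T = \sum_{|S|,|T| \le t} \hat f_S \hat f_T X_{S \triangle T}$, so by linearity $\pex{D}(f^2) = \sum_{S,T} \hat f_S \hat f_T \pex{D}(X_{S \triangle T}) = \hat f^\top M_t \hat f$. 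Since $D$ has degree $2t$ the left-hand side is nonnegative, and since $f$ (equivalently $\hat f$) was arbitrary, $M_t \succeq 0$.

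For the converse, suppose $M_t \succeq 0$. The key observation is that a function $D : \{-1,1\}^n \to \rr$ is determined by its expansion in the character basis $\{X_S\}_{S \subseteq [n]}$, which is orthogonal under the counting measure: $\sum_{x \in \{-1,1\}^n} X_S(x) X_T(x) = \sum_x X_{S \triangle T}(x) = 2^n \cdot \mathbbm{1}[S = T]$. I would simply set $D(x) := 2^{-n} \sum_{|S| \le 2t} \gamma_S X_S(x)$, a genuine real-valued function on the cube. Orthogonality gives $\sum_x D(x) = 2^{-n}\cdot \gamma_\emptyset \cdot 2^n = \gamma_\emptyset = 1$, so $D$ is a pseudo-distribution, and likewise $\pex{D}(X_T) = \sum_x D(x) X_T(x) = \gamma_T$ for every $|T| \le 2t$, so the $\gamma_S$ are exactly its degree-$2t$ pseudo-moments. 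Finally, $D$ has degree $2t$: for $f$ of degree at most $t$ the same expansion as above gives $\pex{D}(f^2) = \sum_{|S|,|T|\le t} \hat f_S \hat f_T \pex{D}(X_{S\triangle T}) = \sum_{S,T} \hat f_S \hat f_T \gamma_{S \triangle T} = \hat f^\top M_t \hat f \ge 0$, where we used $|S \triangle T| \le 2t$ so that $\pex{D}(X_{S\triangle T}) = \gamma_{S\triangle T}$.

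I expect no serious obstacle; once the character-basis viewpoint is in place the argument is essentially bookkeeping. The one point worth stating explicitly is that the construction need not commit to any moments of degree exceeding $2t$: truncating the character expansion at degree $2t$ (equivalently, declaring the higher Fourier coefficients to be $0$) is harmless because $f^2$ with $\deg f \le t$ only involves monomials of degree at most $2t$, so the degree-$2t$ positivity requirement never ``sees'' them. A secondary thing to track carefully is the normalization constant $2^{-n}$ relating a Fourier coefficient to the corresponding pseudo-moment, $\pex{D}(X_S) = 2^n \widehat{D}(S)$.
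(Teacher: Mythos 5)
Your proof is correct and follows essentially the same route as the paper: both directions hinge on the identity $\pex{D}(f^2) = \hat f^\top M_t \hat f$ for $\deg f \le t$, using $X_S X_T = X_{S\triangle T}$ on the cube. The one (welcome) improvement is in the converse: where the paper simply declares the pseudo-moments of $D$ (setting those of degree $> 2t$ to zero) without exhibiting $D$ as a function on $\{-1,1\}^n$, you make the construction explicit via Fourier inversion, $D(x) = 2^{-n}\sum_{|S|\le 2t}\gamma_S X_S(x)$, and then verify normalization and the moment identities by orthogonality of characters. This is the same underlying idea — prescribing all Fourier coefficients determines $D$ uniquely — spelled out in full, and it closes a small gap in the paper's phrasing.
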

\begin{proof}
$\impliedby$: We define the pseudo-moments of $D$ as follows - 
\[ \pex{D}(X_S) = \begin{cases} \gamma_S & |S| \leq 2t \\ 0 & |S| > 2t \end{cases}. \]
We first prove this is a pseudo-distribution. The only condition we need to check is that $\sum_{x \in \{-1, 1\}^n} D(x) = 1$. Indeed, $\sum_{x \in \{-1, 1\}^n} D(x)$ is just the moment of the empty set which we enforce as equal to 0. Next, we have to prove that $D$ is of degree at least $2t$. Take a function $f$ of degree $t$. We express $f$ as 
\[ f = \sum_{S \subseteq [n]} \hat{f}_S X_S = \sum_{S \subseteq [n] , |S| \leq t} \hat{f}_S X_S. \]
In terms of the moments basis, the vector representation of $f$ (which belongs to the space of polynomials of degree at most $t$) is clearly $\hat{f} = (\hat{f}_S)_{S \subseteq [n], |S| \leq t}$. Recall $\pex{D}(f^2) = \hat{f}^T M_t \hat{f} \geq 0$, since $M_t$ is positive semi-definite. 

$\implies$: We prove $M_t$ is positive semi-definite. This is almost identical to the last direction; we need only show that $\hat{f}^T M_t \hat{f} \geq 0$ for all vectors $\hat{f}$ is the space of polynomials of degree at most $t$. Since $\hat{f}$ corresponds to a polynomial of degree at most $t$, by definition of $D$ being degree at least $2t$, we know that $\hat{f}^T M_t \hat{f} = \pex{D}(f^2) \geq 0$. 
\end{proof}

\subsection{Satisfying a Constraint}
We showed in the last sub-section how an $O(n^t) \times O(n^t)$ PSD matrices correspond to pseudo-distributions of degree $2t$. So, finding a pseudo-distribution which maximizes a certain objective function reduces to semi-definite programming. What if we want to impose additional restrictions on the pseudo-distribution?

\begin{defn}[Equality Constraints on a Pseudo-distribution]

A degree $2t$ pseudo-distribution $D$ on $\{-1, 1\}^n$ satisfies a constraint $q(x) = 0$ if
\[ \pex{D} (q \cdot X_S) = 0 \]
for all $|S|$ such that $|S| + \text{deg}(q) \leq 2t$. This is equivalent to the condition 
\[ \pex{D} (q \cdot p) = 0 \]
for all polynomials $p$ such that $\text{deg}(p) + \text{deg}(q) \leq 2t$.
\end{defn}

One may interpret the condition $\pex{D}(q \cdot p) = 0$ for all polynomials $p(x)$ such that $\text{deg}(p) \leq 2t - q$ as $\langle q, p \rangle_{D} = 0$ for all vectors $p$ in a subspace of dimension $2t - q$. Indeed, in Euclidean space a vector $v$ satisfying the condition $\langle v , w \rangle = 0$ for all vectors $w$, is equivalent to $v = 0$. 

To encode inequality constraints, a bit more work has to be done.

\begin{defn}[Inequality Constraints on a Pseudo-distribution]
A degree $2t$ pseudo-distribution $D$ on $\{-1, 1\}^n$ satisfies a constraint $q(x) \geq 0$ if the matrix $M_q$ defined as 
\[ (M_q)_{S, T} := \pxd(q(x) \cdot X_{S \Delta T})\]
for $S, T$ of sizes at most $t - \text{deg}(q)/2$ is positive semi-definite. In other words, 
\[ \pxd(q(x) \cdot f^2) \geq 0 \]
for all $f$ of degree at most $t - \frac{\text{deg}(q)}{2}$. 
\end{defn}

This condition is inspired from the following equivalent definition of a vector $v$ in Euclidean space being non-negative: $v \geq 0$ if and only if $\langle v, w \rangle \geq 0$ for all $w \geq 0$.  

\begin{remark}
\label{rem:constraints}
Here, we study what it means for a probability distribution (which has infinite degree) $\mu$ on $\{-1, 1\}^n$ to satisfy a constraint $q(x)$. Take the indicator function $\chi_y(x)$ which takes on value 1 at $y$. Then, 
\begin{align*}
    0 \leq \ex_\mu{\chi_y^2(x) \cdot q(x)} &= \ex_\mu{\chi_y(x) \cdot q(x)} \\
    &= \sum_{x\in \{-1,1\}^n} \mu(x) \chi_y(x) \cdot q(x) \\
    &= \mu(y) \cdot q(y)
\end{align*}
Hence, if $y$ is in the support of $p$, we may conclude $q(y) \geq 0$. Every point in the support of $p$ satisfies $q(x) \geq 0$.
\end{remark}

Let's motivate why this definition of ``constrained" is natural. When considering vectors $v \in \rr^n$, the constraint $v$ non-negative, i.e., $v \geq 0$, is equivalent to 
\[ \langle v, u \rangle \geq 0 \]
for all $u \geq 0$. In our setting, to show $q(x)$ is non-negative we would ideally like to show that, with respect to our bi-linear form, $\langle q(x), p(x) \rangle_{D}$ is at least 0 for all non-negative polynomials $p$. However, it is difficult to reason about non-negative polynomials; hence we settle for the set of polynomials we can easily verify are non-negative -- squares.

\begin{cor}
For any $q_1(x), \hdots q_m(x)$ and $p(x)$ all of degree at most $t$, finding a degree $2t$ pseudo-distribution satisfying the constraints $q_i(x) \geq 0$ which maximizes $p(x)$ is reducible to a semi-definite program of size $\text{poly}(m, n^{O(t)})$.
\end{cor}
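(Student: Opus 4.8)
The plan is to write the problem as a standard-form semidefinite program and then read off its size. The decision variables will be the candidate degree-$2t$ pseudo-moments $\gamma = (\gamma_S)_{S \subseteq [n],\, |S| \le 2t}$, with $\gamma_\emptyset$ pinned to $1$. Since over $\{-1,1\}^n$ every monomial reduces to a multilinear one via $x_i^2 = 1$, there are only $\sum_{j \le 2t} \binom{n}{j} = n^{O(t)}$ such variables, and by the Lemma on moment matrices (together with the normalization $\gamma_\emptyset = 1$) these numbers are precisely the data of a degree-$2t$ pseudo-distribution, provided the matrix $M_t$ they define is PSD.

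First I would encode the objective. Writing $p$ in the multilinear basis, $p = \sum_{|S| \le t} \hat p_S X_S$, linearity of pseudo-expectation gives $\pex{D}(p) = \sum_{|S| \le t} \hat p_S \gamma_S$, a linear functional of $\gamma$. Next I would encode the two kinds of constraints as PSD conditions on matrices whose entries are linear in $\gamma$. For ``$D$ has degree $2t$'', the Lemma says this is equivalent to $M_t \succeq 0$, where $M_t(S,T) = \gamma_{S \triangle T}$ for $|S|,|T| \le t$; this is a symmetric matrix of dimension $\binom{n}{\le t} = n^{O(t)}$ each of whose entries is one of the variables $\gamma_S$. For ``$D$ satisfies $q_i(x) \ge 0$'', the definition of an inequality constraint says this is equivalent to $M_{q_i} \succeq 0$, where $(M_{q_i})_{S,T} = \pxd(q_i \cdot X_{S \triangle T})$ for $|S|,|T| \le t - \deg(q_i)/2$. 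Expanding $q_i \cdot X_{S \triangle T}$ in the multilinear basis — its degree is at most $\deg(q_i) + 2(t - \deg(q_i)/2) = 2t$, so only the declared moments appear — and using linearity, each entry of $M_{q_i}$ is a fixed linear combination of the $\gamma_S$'s, and $M_{q_i}$ has dimension $\binom{n}{\le t} = n^{O(t)}$.

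Finally I would assemble everything: maximize the linear objective $\sum_{|S|\le t}\hat p_S \gamma_S$ over $\gamma$, subject to the affine constraint $\gamma_\emptyset = 1$ and to the single block-diagonal PSD constraint $\mathrm{diag}\big(M_t(\gamma),\, M_{q_1}(\gamma),\, \dots,\, M_{q_m}(\gamma)\big) \succeq 0$, where each block is the affine-linear image of $\gamma$ described above; its total dimension is $(m+1)\cdot n^{O(t)} = \mathrm{poly}(m, n^{O(t)})$. By construction the feasible region is exactly the set of degree-$2t$ pseudo-distributions over $\{-1,1\}^n$ satisfying all the $q_i \ge 0$, and the optimum is $\max_D \pex{D}(p)$, so this is a semidefinite program of the stated form and size, and by the reduction to conic optimization discussed earlier it can be solved.

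I do not expect a genuine obstacle here; the work is bookkeeping. The one point needing care is verifying that the multilinear reduction keeps every monomial appearing anywhere — in the objective, in $M_t$, and in each $M_{q_i}$ — of degree at most $2t$, so that the program really involves only the $n^{O(t)}$ declared variables and the block dimensions are as claimed. Once that degree count is in hand, linearity of pseudo-expectation does the rest, and the $\mathrm{poly}(m, n^{O(t)})$ size bound follows from counting subsets of $[n]$ of size at most $t$.
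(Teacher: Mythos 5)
Your proposal is correct and takes essentially the same route as the paper, which simply writes down the SDP with the moment matrix $M$ (normalized by $M_{\emptyset,\emptyset}=1$), the localizing matrices $M_{q_i}$, and a linear objective in the pseudo-moments. You fill in the bookkeeping the paper leaves implicit — the degree count ensuring only the declared $n^{O(t)}$ moments appear, and the block-diagonalization giving the $\mathrm{poly}(m,n^{O(t)})$ size bound.
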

\begin{proof}
We explicitly write out the SDP. 
\begin{gather*}
    \max_{D} \quad M \cdot p \\
    M_{\emptyset, \emptyset} = 1 \\
    M \succeq 0 \\
    M_{q_i} \geq 0 \quad \text{for all $i \in [m]$.} 
\end{gather*}
\end{proof}

Next, we study restrictions of pseudo-distributions to a subset of variables. 
\begin{defn}
A restriction of pseudo-distribution $D$ to variables $S$ is a pseudo-distribution $D'$ over the variables in $S$ such that 
\[ D'(y) = \sum_{x \mid x_{|S} = y} D(x). \]
\end{defn}

\begin{lemma}[Local Distributions]
Suppose $D$ is a local distribution on $\{-1, 1\}^n$ of degree at least $2t$. Consider the restriction of $D$ to any set of variables $S$, where $|S| \leq t$. Then, there is a probability distribution $\mu$ on $\{-1, 1\}^S$ such that for all $T \subseteq S$, $\pex{D}(X_T) = \ex_{\mu}(X_T)$. 
\end{lemma}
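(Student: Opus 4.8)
The plan is to show that the restriction $D'$ defined by $D'(y) = \sum_{x \,:\, x_{|S} = y} D(x)$ is not merely a pseudo-distribution but an honest probability distribution on $\{-1,1\}^S$, i.e.\ that $D'(y) \ge 0$ for every $y$; once this nonnegativity is in hand, the moment-matching identity $\pex{D}(X_T) = \ex_{D'}(X_T)$ for $T \subseteq S$ will drop out by unwinding definitions.

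First I would, for each fixed $y \in \{-1,1\}^S$, introduce the partial indicator polynomial
\[ \chi^S_y(x) := \prod_{i \in S \,:\, y_i = 1} \left( \frac{1+x_i}{2} \right) \prod_{i \in S \,:\, y_i = -1} \left( \frac{1-x_i}{2} \right), \]
which has degree exactly $|S| \le t$. The two elementary facts I would record about it are: (i) as a function on $\{-1,1\}^n$ it equals $1$ on every $x$ with $x_{|S} = y$ and $0$ on every other point of the cube, so in particular $(\chi^S_y)^2 = \chi^S_y$ as functions on the hypercube; and (ii) therefore $\pex{D}(\chi^S_y) = \sum_{x \,:\, x_{|S}=y} D(x) = D'(y)$, since $\pex{D}$ only sees the values of a polynomial on the cube.

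The main step is to invoke the degree hypothesis. Because $\chi^S_y$ is a polynomial of degree $|S| \le t$ and $D$ has degree at least $2t$, the definition of a degree-$2t$ pseudo-distribution gives $\pex{D}\!\left( (\chi^S_y)^2 \right) \ge 0$. Combining this with fact (i), which says $(\chi^S_y)^2$ and $\chi^S_y$ agree on the cube and hence receive the same value under $\pex{D}$, and with fact (ii), we get $D'(y) = \pex{D}(\chi^S_y) = \pex{D}\!\left( (\chi^S_y)^2 \right) \ge 0$. Summing over $y \in \{-1,1\}^S$ telescopes to $\sum_{x \in \{-1,1\}^n} D(x) = 1$, so $D'$ is a genuine probability distribution; set $\mu := D'$.

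Finally, for any $T \subseteq S$ I would observe that whenever $x_{|S} = y$ one has $X_T(x) = \prod_{i \in T} x_i = \prod_{i \in T} y_i = X_T(y)$, so that
\[ \ex_{\mu}(X_T) = \sum_{y \in \{-1,1\}^S} D'(y)\, X_T(y) = \sum_{x \in \{-1,1\}^n} D(x)\, X_T(x) = \pex{D}(X_T), \]
which is exactly the asserted moment matching. The only place that needs any care is the degree bookkeeping: one must check that $\chi^S_y$ has degree at most $t$ (this is where $|S| \le t$ is used) so that its square is ``visible'' to a degree-$2t$ pseudo-distribution, together with the observation that squaring changes nothing as a function on the cube. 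Everything else is a routine unfolding of definitions, so I do not anticipate a genuine obstacle.
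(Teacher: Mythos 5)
The paper actually states this lemma without proof, so there is no in-text argument to compare against; your proposal fills that gap and is correct. The core move — evaluating the degree-$|S|\le t$ partial indicator $\chi^S_y$ under $\pex{D}$, observing $(\chi^S_y)^2 = \chi^S_y$ as functions on the cube, and using the degree-$2t$ hypothesis to get $D'(y)=\pex{D}\bigl((\chi^S_y)^2\bigr)\ge 0$ — is exactly the right one, and it is also implicitly how the paper's Corollary (``a pseudo-distribution of degree $2n$ is a probability distribution'') would be proved. Your approach establishes pointwise nonnegativity of the restriction $D'$ directly, whereas the paper's proof of the companion Global Distributions lemma instead checks $\pex{\mu}(f^2)\ge 0$ for all $f$ and appeals to that Corollary; the two routes are essentially equivalent, with yours being the more concrete unrolling. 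One small bookkeeping point worth making explicit in a final write-up: the identity $\pex{D}(\chi^S_y)=\pex{D}\bigl((\chi^S_y)^2\bigr)$ is legitimate precisely because the paper defines $\pex{D}(f)=\sum_{x\in\{-1,1\}^n} D(x)f(x)$, so pseudo-expectation depends only on the values of $f$ on the cube (equivalently, one may multilinearize first); you do flag this, and it is the right justification.
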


At this point, we instead talk about pseudo-distributions over $\{0, 1\}^n$, since it is more applicable to the Knapsack problem we are considering.

\begin{lemma}
Suppose for all $x \in \{0, 1\}^n$, $\sum_{i = 1}^n c_i x_i \leq C$ implies $\sum_{i = 1}^n x_i \leq k$. Then, for every pseudo-distribution of degree at least $2k + 2$ satisfying $\sum_{i = 1}^n c_i x_i \leq C$, it holds that $\pex{D}(X_S) = 0$ for all $S$ such that $k + 1 \leq |S| \leq 2k + 2$. 
\end{lemma}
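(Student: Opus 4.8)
The plan is to prove the vanishing first for sets of size exactly $k+1$ and then propagate it upward to all sizes up to $2k+2$ by a Cauchy--Schwarz argument. Two preliminary observations will be used throughout. First, since we work over $\{0,1\}^n$ we may use $x_i^2 = x_i$, so $X_T^2$ reduces to $X_T$; as $D$ has degree at least $2k+2$, this gives $\pex{D}(X_T) = \pex{D}(X_T^2) \geq 0$ for every $|T| \leq k+1$. Second, the hypothesis enters exactly once: substituting $x = \mathbf{1}_S$ into ``$\sum_i c_i x_i \leq C \Rightarrow \sum_i x_i \leq k$'' (in contrapositive form) shows that every $S$ with $|S| = k+1$ satisfies $\sum_{i \in S} c_i > C$, i.e.\ picking precisely the items of $S$ already overflows the knapsack.

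\textbf{Step 1 (the case $|S| = k+1$).} Set $\delta_S := \sum_{i \in S} c_i - C > 0$. Multiplying $X_S$ by $\sum_\ell c_\ell x_\ell - C$ and simplifying with $x_\ell X_S = X_S$ for $\ell \in S$ and $x_\ell X_S = X_{S \cup \{\ell\}}$ for $\ell \notin S$ (both valid modulo $x_\ell^2 = x_\ell$) yields the polynomial identity
\[ \delta_S\, X_S \;=\; -\Bigl(C - \sum_\ell c_\ell x_\ell\Bigr) X_S \;-\; \sum_{\ell \notin S} c_\ell\, X_{S \cup \{\ell\}} . \]
Apply $\pex{D}$ to both sides. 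On the right, $\pex{D}\bigl((C - \sum_\ell c_\ell x_\ell) X_S\bigr) = \pex{D}\bigl((C - \sum_\ell c_\ell x_\ell) X_S^2\bigr) \geq 0$ because $D$ satisfies the capacity inequality and $X_S^2$ is a square, while $\sum_{\ell \notin S} c_\ell \pex{D}(X_{S \cup \{\ell\}}) \geq 0$ since each $c_\ell \geq 0$ and each $\pex{D}(X_{S \cup \{\ell\}}) \geq 0$ (here $|S \cup \{\ell\}| = k+2$). Hence $\delta_S \pex{D}(X_S) \leq 0$, and since $\delta_S > 0$ we get $\pex{D}(X_S) \leq 0$; combined with $\pex{D}(X_S) \geq 0$ from the preliminary remark, $\pex{D}(X_S) = 0$.

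\textbf{Step 2 (from $k+1$ up to $2k+2$).} Given $S$ with $k+1 < |S| \leq 2k+2$, partition $S = S_1 \sqcup S_2$ with $|S_1| = k+1$ and $|S_2| = |S| - (k+1) \leq k+1$, so $X_S = X_{S_1} X_{S_2}$ with both factors of degree at most $k+1$. Applying the Cauchy--Schwarz theorem with $d = k+1$ (permitted since $D$ has degree at least $2k+2$) to $(X_{S_1}, X_{S_2})$ and to $(-X_{S_1}, X_{S_2})$ gives
\[ \bigl|\pex{D}(X_S)\bigr| \;\leq\; \sqrt{\pex{D}(X_{S_1}^2)}\,\sqrt{\pex{D}(X_{S_2}^2)} \;=\; \sqrt{\pex{D}(X_{S_1})}\,\sqrt{\pex{D}(X_{S_2})} \;=\; 0 \]
by Step 1 (using $X_{S_i}^2 \equiv X_{S_i}$), so $\pex{D}(X_S) = 0$. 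This covers all $k+1 \leq |S| \leq 2k+2$.

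The step I expect to require the most care is the sign accounting inside Step 1: one must verify that ``$D$ satisfies $\sum_i c_i x_i \leq C$'', together with the Boolean relations $x_i^2 = x_i$, genuinely forces $\pex{D}\bigl((C - \sum_\ell c_\ell x_\ell) X_S\bigr) \geq 0$ and $\pex{D}(X_{S \cup \{\ell\}}) \geq 0$ for the sizes that occur ($|S| = k+1$ and $|S \cup \{\ell\}| = k+2$). This is exactly where the available degree is spent, so when the reductions modulo $x_i^2 = x_i$ are written out in full one should check carefully how much degree each of these two sub-claims actually consumes, and, if $2k+2$ is a hair short for testing the capacity constraint against the square $X_S^2$, either adjust the degree hypothesis slightly or re-examine which squares the capacity constraint may legitimately be applied to at degree $2k+2$.
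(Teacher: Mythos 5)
Your Step 2 is exactly the paper's Case 2 (split $S$ into a size-$(k+1)$ piece and a remainder, apply Cauchy--Schwarz). Your Step 1, however, is a genuinely different argument: the paper passes through the local distribution lemma, showing that the restriction $\mu$ of $D$ to $S$ satisfies the truncated capacity constraint $C - \sum_{i \in S} c_i x_i \geq 0$, then invokes Remark~\ref{rem:constraints} to conclude every support point of $\mu$ satisfies the capacity constraint, hence (by the combinatorial hypothesis) assigns zero mass to $\mathbf{1}_S$. You instead give a direct ``sum-of-squares certificate'': multiply $X_S$ by the capacity constraint, reduce modulo $x_i^2 = x_i$, and read off $\delta_S \pex{D}(X_S) \leq 0$ from the signs. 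This is shorter, bypasses the local-distribution and Remark~\ref{rem:constraints} machinery entirely, and is closer to what one would actually write as an SoS proof; the paper's route is more conceptual and reuses general lemmas. Both approaches are legitimate in spirit.

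There is, however, a concrete gap in your Step 1 beyond the one you flagged. You invoke $\pex{D}(X_{S \cup \{\ell\}}) \geq 0$ where $|S \cup \{\ell\}| = k+2$, but your own preliminary observation (``$\pex{D}(X_T) = \pex{D}(X_T^2) \geq 0$'') applies only for $|T| \leq k+1$, since $\pex{D}(f^2) \geq 0$ is guaranteed only for $\deg f \leq t = k+1$ at degree $2t = 2k+2$. For $|T| = t+1$ this is genuinely not automatic from PSDness of the degree-$2t$ moment matrix: already at $t = 1$ (that is, $k=0$) one can write down a $3\times 3$ PSD moment matrix over $\{0,1\}^2$ with $\pex{D}(x_1 x_2) < 0$. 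So ``$c_\ell \geq 0$ and $\pex{D}(X_{S \cup \{\ell\}}) \geq 0$'' does not hold for free; you need to extract this nonnegativity from somewhere, and at degree $2k+2$ you cannot. The cleanest fix is to observe that $\pex{D}(X_{S\cup\{\ell\}}) = \pex{D}\bigl((X_S\, x_\ell)^2\bigr)$ and to require degree $\geq 2k+4$, which makes $\deg(X_S x_\ell) = k+2 \leq t$ admissible; this simultaneously resolves the issue you already flagged about testing the capacity constraint against $X_S^2$ (which also needs $\deg X_S = k+1 \leq t - \lceil \deg q / 2\rceil$, hence $t \geq k+2$). For what it is worth, the paper's own Case~1 has a sibling of this problem: it verifies $\pex{D}\bigl((C - \sum_{i\in S} c_i x_i) f^2\bigr) \geq 0$ only for $\deg f \leq k$, which does not suffice to conclude the constraint holds pointwise on the support of a distribution over $\{0,1\}^S$ with $|S| = k+1$ (one needs $f$ up to degree $k+1$ to include the indicators $\chi_y$). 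So the degree hypothesis ``at least $2k+2$'' appears to be one step short in both arguments; your instinct at the end to ``adjust the degree hypothesis slightly'' is the right one, and you should make that adjustment explicit rather than leaving it as a caveat.
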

\begin{proof}
\textit{Case 1.} We start with the case $|S| = k + 1$. Since $D$ is a pseudo-distribution of degree at least $2k + 2$ and $S$ has size $k + 1$, we can consider the local distribution $\mu$ restricted to variables $S$. Recall that $\mu$ and $D$ agree on moments $T$ where $T$ is a subset of $S$. We want to show that $\mu$ also satisfies the constraint $\sum_{i = 1}^n c_i x_i \leq C$ -- if we had this, then by \Cref{rem:constraints} every point $x$ in the support of $\mu$ satisfies the inequality $\sum_{i = 1}^n c_i x_i \leq C$, which in turn by the assumptions of the theorem, imply that $\sum_{i = 1}^n  x_i \leq k$. Therefore, $\pex{D}(X_S) = \mathbb{E}_{\mu}(X_S) = 0$. 

So it remains to show that $\mu$ satisfies the constraint $\sum_{i = 1}^n c_i x_i \leq C$. Recall $\mu$ is supported on $\{0, 1\}^S$; so for $\mu$ to satisfy the constraint, we would need 
\[ \pex{\mu}\left( \left( C - \sum_{i \in S} c_i x_i \right) \cdot f^2 \right) \geq 0 \]
for $f:\{0, 1\}^S \rightarrow \mathbb{R}$ of degree at most $\frac{2k + 1}{2}$ (effectively, at most $k$). Since $\mu$ and $D$ agree on all moments which are subsets of $S$, we have 
\[ \pex{\mu}\left( \left( C - \sum_{i \in S} c_i x_i \right) \cdot f^2 \right) = \pex{D}\left( \left( C - \sum_{i \in S} c_i x_i \right) \cdot f^2 \right) \]

So, at this point, if we show
\[ \pxd\left(\left(C - \sum_{i \in S} c_i x_i \right) f^2 \right) \geq 0 \]
for any $f$ of degree at most $k$, then we are done. 

From the definition of satisfying a constraint, we have
\[ \pex{D}\left(\left(C - \sum_{i=1}^n c_i x_i \right) f^2 \right) \geq 0. \]
Expanding the left hand side, we get
\begin{align*}
    \pex{D}\left( \left(C - \sum_{i=1}^n c_i x_i \right) f^2 \right) &= \pex{D}\left( \left(C - \sum_{i \in S}^n c_i x_i \right) f^2 \right) - \pex{D}\left( \left(\sum_{i\not \in S} c_i x_i \right) f^2 \right) 
\end{align*}
so it suffices to show 
\[ \pex{D}\left( \left(\sum_{i\not \in S} c_i x_i \right) f^2 \right) \geq 0. \]
Note that 
\begin{align*}
    \pex{D}\left(f^2 \left(\sum_{i\not \in S} c_i x_i \right) \right) &= \sum_{i\not \in S} c_i \pex{D}(f^2 x_i)
\end{align*}

Note that $\pex{D}(f^2 x_i) = \pex{D}(f^2 x_i^2) = \pex{D}( (f x_i)^2 )$, and since $fx_i$ is a polynomial of degree at most $k + 1$, by the degree of our pseudo-distribution, $\pex{D}(f^2 x_i^2) \geq 0$.

And since $c_i \geq 0$, we may conclude
\[ \pex{D}\left(\left(\sum_{i\not \in S} c_i x_i \right) f^2 \right) \geq 0 \]
as desired.

\textit{Case 2.} Take the case where $|S| > k + 1$. Let $T \subseteq S$ be a set of size $k + 1$. Then, 
\[ X_S = X_T \cdot X_{S \setminus T}. \]
Note that $|S \setminus T| \leq 2k + 2 - (k + 1) = k + 1$. Hence, by Cauchy-Schwartz, 
\[ \pxd(X_S) \leq \sqrt{\pxd(X_T^2)} \cdot \sqrt{\pxd(X_{S \setminus T}^2)} = 0.  \]
\end{proof}

\begin{lemma}[Global Distributions]
Suppose $D$ is a pseudo-distribution of degree at least $2k + 2$ on $\{0, 1\}^n$ such that for all $S \subseteq [n]$ such that $k + 1 \leq |S| \leq 2k + 2$, we have $\pex{D}(X_S) = 0$. Then, there exists a probability distribution $\mu$ over $\{0, 1\}^n$ such that for all $S \subseteq [n]$ such that $k + 1 \leq |S| \leq 2k + 2$, $\ex_\mu{X_S} = \pex{D}(X_S)$. 
\end{lemma}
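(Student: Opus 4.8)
The plan is to construct $\mu$ explicitly via Möbius inversion on the subset lattice and then prove nonnegativity by an idempotence trick. Write $\mathrm{supp}(y) = \{i : y_i = 1\}$ for $y \in \{0,1\}^n$, and put $\gamma_U := \pxd(X_U)$ if $|U| \le k$ and $\gamma_U := 0$ if $|U| > k$; this agrees with the hypothesis, since $\pxd(X_U) = 0$ already for $k+1 \le |U| \le 2k+2$. For each $T$ with $|T| \le k$ define the polynomial of degree at most $k$
\[ q_T(x) \;:=\; \sum_{U \supseteq T,\ |U| \le k}(-1)^{|U|-|T|}X_U, \]
the degree-$k$ truncation of the exact point-indicator $X_T\prod_{i\notin T}(1-x_i)$, and set $\mu(y) := \sum_{U \supseteq \mathrm{supp}(y)}(-1)^{|U|-|\mathrm{supp}(y)|}\gamma_U$ for all $y$. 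Each such sum is finite, equals $\pxd(q_{\mathrm{supp}(y)})$ when $|\mathrm{supp}(y)| \le k$, and vanishes otherwise, so $\mu$ is supported on $\{y : |\mathrm{supp}(y)| \le k\}$.

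The moment identities are then routine. Since $X_S(y) = \mathbbm{1}[S \subseteq \mathrm{supp}(y)]$ on $\{0,1\}^n$, we have $\ex_\mu(X_S) = \sum_{T \supseteq S}\mu(\mathbbm{1}_T) = \sum_{T \supseteq S}\sum_{U\supseteq T}(-1)^{|U|-|T|}\gamma_U$; swapping the finite sums and using $\sum_{S \subseteq T \subseteq U}(-1)^{|U|-|T|} = \mathbbm{1}[U = S]$ collapses this to $\gamma_S$. Hence $\ex_\mu(X_S) = \gamma_S = \pxd(X_S)$ for all $|S| \le k$, and $\ex_\mu(X_S) = \gamma_S = 0 = \pxd(X_S)$ for $k+1 \le |S| \le 2k+2$; the case $S = \emptyset$ gives $\sum_y \mu(y) = 1$. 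So everything reduces to showing $\mu \ge 0$, i.e.\ $\pxd(q_T) \ge 0$ whenever $|T| \le k$.

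This last step is the crux, and the key observation is that $q_T$ is idempotent modulo the relations that $\pxd$ respects. A direct alternating-binomial computation shows that for every $U$ with $|U| \le k$ one has $q_T(\mathbbm{1}_U) = \mathbbm{1}[U = T] \in \{0,1\}$, so $q_T$ and $q_T^2$ agree as functions on every point of Hamming weight at most $k$. By Möbius inversion, two multilinear polynomials that agree on all weight-$\le k$ points have the same monomial coefficients in degrees $0$ through $k$; hence the reduction of $q_T - q_T^2$ modulo $x_i^2 = x_i$ (which has degree at most $2k$) is a linear combination of monomials $X_S$ with $k+1 \le |S| \le 2k$. Since $\pxd$ is just a weighted sum of evaluations on the cube it is unchanged by reduction modulo $x_i^2 = x_i$, and it annihilates every $X_S$ with $k+1 \le |S| \le 2k+2$ by hypothesis, so $\pxd(q_T) = \pxd(q_T^2)$; and $\pxd(q_T^2) \ge 0$ because $q_T$ has degree at most $k \le k+1$ and $D$ has degree at least $2k+2$. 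Thus $\mu$ is a probability distribution, and in fact $\ex_\mu(X_S) = \pxd(X_S)$ for all $|S| \le 2k+2$, which is stronger than the stated claim.

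The nonnegativity is the only real obstacle. One cannot simply exhibit $q_T$ as a sum of squares of low-degree polynomials, because $q_T$ takes negative values at some high-Hamming-weight points; the point is to pass to the quotient by the vanishing high-degree moments, where $q_T$ only needs to match a $0/1$-valued function on the low-weight slice and so becomes an honest square. An alternative and more computational proof proceeds by induction on $n$: condition $D$ on $x_n$ (using Cauchy--Schwarz to handle the degenerate cases $\pxd(x_n) \in \{0,1\}$), note that each conditioned pseudo-distribution again has vanishing high moments with its degree and weight budgets shifted by one, apply the inductive hypothesis on the remaining coordinates, and recombine with weights $\pxd(x_n)$ and $1 - \pxd(x_n)$; I would nonetheless present the Möbius/idempotence argument, as it is one-shot and free of case analysis.
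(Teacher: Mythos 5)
Your proof is correct, and it reaches the same distribution $\mu$ as the paper (both have $\ex_\mu(X_S) = \pxd(X_S)$ for $|S| \le 2k+2$ and zero beyond), but by a genuinely different route. The paper defines $\mu$ directly through its pseudo-moments and then verifies $\pex{\mu}(f^2) \ge 0$ for \emph{every} multilinear $f$ by splitting $f = f_{sm} + f_{lg}$ at degree $k+1$: because $\pex{\mu}$ annihilates every monomial of degree at least $k+1$ (the band $k+1,\dots,2k+2$ by hypothesis, higher degrees by construction), the cross term $f_{sm}f_{lg}$ and the term $f_{lg}^2$ vanish and $\pex{\mu}(f^2) = \pxd(f_{sm}^2) \ge 0$; nonnegativity of $\mu$ then follows from the earlier corollary that a pseudo-distribution which is PSD on all squares is an honest distribution. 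You instead build $\mu$ pointwise by M\"obius inversion, identify $\mu(y) = \pxd(q_T)$ with $q_T$ the degree-$k$ truncation of the point indicator, and prove $\mu(y) \ge 0$ via the idempotence observation: $q_T$ and $q_T^2$ agree on all weight-$\le k$ points, so the multilinear reduction of $q_T - q_T^2$ lives entirely in degrees $k+1$ through $2k$ and is killed by $\pxd$, giving $\pxd(q_T) = \pxd(q_T^2) \ge 0$. Both arguments ultimately use the same two facts (PSD-ness of $D$ on low-degree squares, vanishing of the mid-band moments), but the paper's low/high decomposition avoids any M\"obius or idempotence lemma at the cost of routing through the general square condition, whereas your argument is more explicit, yields a closed formula for $\mu(y)$, and in effect specializes the paper's computation to the indicator functions $\mathbbm{1}_y$ directly rather than abstractly.
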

\begin{proof}
We define $\mu$ as follows: 
\[ 
\pex{\mu}(X_S) = \begin{cases} \pex{D}(X_S) & |S| \leq 2k + 2 \\
0 & |S| > 2k + 2. 
\end{cases}
\]
To show $\mu$ is a probability distribution over $\{0, 1\}^n$, it suffices to show 
\[ \pex{\mu}(f^2) \geq 0 \]
for all polynomials $f: \{0, 1\}^n \rightarrow \rr$. To see this, we write $f$ as 
\[ f = \sum_{S \mid |S| \leq k + 1} \hat{f}_S X_S + \sum_{S \mid |S| > k + 1} \hat{f}_S X_S . \]
Call the former term $f_{sm}$ and the latter term $f_{lg}$. Then, 
\[ f^2 = f_{sm}^2 + 2 f_{sm}f_{lg} + f_{lg}^2.  \]
And so, 
\begin{align*}
    \pex{\mu}(f^2) &= \pex{\mu}(f_{sm}^2) + 2 \pex{\mu}(f_{sm} f_{lg} ) +  \pex{\mu}(f_{lg}^2) \\
    &\geq 0 
\end{align*}
since the first term is of degree at most $2k + 2$, and the latter terms are of degree larger than $2k + 2$.
\end{proof}

\section{An Approximation Algorithm for Knapsack}

\subsection{The Theorem and Proof}
\begin{thm}
\label{main-thm}
Consider a Knapsack instance with $n$ items with costs $c_1, \hdots, c_n$ and values $v_1, \hdots, v_n$. Then, for every pseudo-distribution on $x_1, \hdots, x_n \in \{0, 1\}$ of degree at least $2t$ satisfying $\sum_{i = 1}^n c_i x_i \leq C$, 
\[ \pex{D}\left( \sum_{i = 1}^n v_i x_i  \right) \leq \left( 1 + \frac{1}{t - 1} \right) OPT. \]
In particular, the integrality gap of the degree $2t$ SoS SDP is at most $\left( 1 + \frac{1}{t - 1} \right)$. 
\end{thm}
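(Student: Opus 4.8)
The plan is to combine the two genuine-distribution extraction lemmas with the LP estimate of the first Lemma, after reducing to the ``large'' items. Assume $\mathrm{OPT}>0$, and discard every item with $c_i>C$: a two-line argument from $\pex{D}\big((C-\sum_j c_jx_j)x_i\big)\ge 0$ (legal since $x_i=x_i^2$) together with $\pex{D}(x_i)\ge 0$ forces $\pex{D}(x_i)=0$ for such $i$, so they contribute nothing. Set $k:=t-1$ and split the items into the large items $H:=\{i:v_i>\mathrm{OPT}/k\}$ and the small items $L:=[n]\setminus H$. Two elementary observations drive the argument: every small item has value at most $\mathrm{OPT}/(t-1)$; and on $\{0,1\}^H$ the constraint $\sum_{i\in H}c_ix_i\le C$ forces $\sum_{i\in H}x_i\le k$, since any capacity-feasible set of $t$ large items would already have value exceeding $\mathrm{OPT}$.

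Next I would globalize on $H$. The restriction $D|_H$ is a degree-$\ge 2t$ pseudo-distribution on $\{0,1\}^H$ that still satisfies $\sum_{i\in H}c_ix_i\le C$, because $C-\sum_{i\in H}c_ix_i=(C-\sum_i c_ix_i)+\sum_{i\notin H}c_ix_i$ and both summands have non-negative pseudo-expectation against any square (for the second, $x_if^2=(x_if)^2$ on $\{0,1\}$). Applying the vanishing-moments lemma with $k=t-1$ gives $\pex{D}(X_S)=0$ for all $S\subseteq H$ with $t\le|S|\le 2t$, and the Global Distributions lemma then yields a genuine probability distribution $\mu$ on $\{0,1\}^H$ agreeing with $D$ on every $H$-moment of degree at most $2t$. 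Since $\mu$ is genuine with vanishing degree-$t$ moments, it is supported on sets $S\subseteq H$ with $|S|\le t-1$; and for each such $S$ in its support, $S$ is capacity-feasible, which follows by expanding $\pex{D}\big((C-\sum_i c_ix_i)X_S\big)\ge 0$ (legal since $X_S=X_S^2$ has degree $|S|\le t-1$) and dividing by $\pex{D}(X_S)=\mu(\{S\})>0$.

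I would then decompose the objective along the $H$-configuration. Writing $\tilde{\mathbb 1}_S:=\sum_{S\subseteq T\subseteq H,\,|T|\le t-1}(-1)^{|T\setminus S|}X_T$ for the degree-$(\le t-1)$ truncated indicator of the event ``$H$-coordinates equal $S$'', one checks that $\sum_{S\subseteq H,\,|S|\le t-1}\tilde{\mathbb 1}_S\equiv 1$, that $\pex{D}(\tilde{\mathbb 1}_S)=\mu(\{S\})\ge 0$, that $\pex{D}(x_j\tilde{\mathbb 1}_S)=0$ for $j\in H\setminus S$, and that $\pex{D}(x_i\tilde{\mathbb 1}_S)=\pex{D}(\tilde{\mathbb 1}_S)$ for $i\in S$; expanding the objective against $\sum_S\tilde{\mathbb 1}_S$ yields the convex decomposition
\[ \pex{D}\Big(\sum_i v_ix_i\Big)=\sum_{S\subseteq H,\,|S|\le t-1}\mu(\{S\})\Big(\sum_{i\in S}v_i+\sum_{i\in L}v_i\,g_i(S)\Big),\qquad g_i(S):=\frac{\pex{D}(x_i\tilde{\mathbb 1}_S)}{\mu(\{S\})}. \]
The key technical claim is that for each $S$ with $\mu(\{S\})>0$, the vector $(g_i(S))_{i\in L}$ is feasible for the knapsack LP on the $L$-items with budget $b_S:=C-\sum_{i\in S}c_i$; the capacity bound reduces to $\pex{D}\big(\tilde{\mathbb 1}_S\,(C-\sum_i c_ix_i)\big)\ge 0$ and the bounds $g_i(S)\in[0,1]$ are similar (equivalently one can phrase all of this through the conditioned pseudo-distribution $f\mapsto\pex{D}(fX_S)/\pex{D}(X_S)$, whose degree-$1$ part is LP-feasible by the degree-$2$ facts, combined with the vanishing moments that kill the $H\setminus S$ coordinates). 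Granting the claim, the first Lemma gives $\sum_{i\in L}v_i g_i(S)\le \mathrm{LP}(L,b_S)\le \mathrm{OPT}(L,b_S)+\max_{i\in L}v_i\le \mathrm{OPT}(L,b_S)+\mathrm{OPT}/(t-1)$, while feasibility of $S$ makes ``$S$ together with an optimal integral $L$-packing of budget $b_S$'' a feasible solution, so $\sum_{i\in S}v_i+\mathrm{OPT}(L,b_S)\le \mathrm{OPT}$; adding these and averaging over $S\sim\mu$ gives $\pex{D}(\sum_i v_ix_i)\le(1+\tfrac1{t-1})\mathrm{OPT}$.

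The main obstacle is the decomposition step, and specifically verifying that the conditional marginals $(g_i(S))_{i\in L}$ form a feasible LP point of capacity $b_S$: because $\tilde{\mathbb 1}_S$ is only a truncated indicator (not a sum of squares), the inequality $\pex{D}(\tilde{\mathbb 1}_S(C-\sum_i c_ix_i))\ge 0$ and the bounds $g_i(S)\in[0,1]$ do not follow from the SoS certificate of the capacity constraint alone and need the vanishing moments on $H$ (or Local Distributions applied to $S\cup\{i\}$). The degree bookkeeping under conditioning, especially at $|S|=t-1$ where essentially all the slack of $D$ is consumed, is the other delicate point; every other step is a direct application of a lemma already proved in the notes.
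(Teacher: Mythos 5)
Your proposal is, up to notation, the paper's proof. The paper's high-value set $S=\{i:v_i\ge \mathrm{OPT}/(t-1)\}$ is your $H$; its degree-$(\le t-1)$ truncation $\rd(f_{S,U})$ of the exact indicator $f_{S,U}$ is your truncated indicator $\tilde{\mathbbm{1}}_U$; its conditional quantity $y_i^U=\pex{D}(x_i\rd(f_{S,U}))/\pex{D}(\rd(f_{S,U}))$ is your $g_i(U)$; both routes pass through the vanishing-moments lemma (with $k=t-1$) and the Global Distributions lemma to obtain a genuine distribution $\mu$ over $\{0,1\}^S$, and both close with the one-extra-item LP slack bound. The single step you defer as the ``main obstacle'' --- LP-feasibility of $(g_i(S))_{i\in L}$ at budget $C-\sum_{i\in S}c_i$, which you rightly observe cannot be read off the SoS capacity certificate because the truncated indicator is not a sum of squares --- is exactly what the paper's Lemma~\ref{fact} and Lemma~\ref{y} supply, and via the mechanism you gesture at. Concretely, writing $h:=f_{S,U}-\rd(f_{S,U})$, the polynomial $\rd(f_{S,U})^2-\rd(f_{S,U}) = h - 2\,\rd(f_{S,U})h - h^2$ is supported on monomials $X_T$ with $T\subseteq S$ and $|T|\ge t$, and the vanishing moments on $S$ annihilate these even after multiplication by one further variable $x_i$. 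Hence $\rd(f_{S,U})$ may be replaced by $\rd(f_{S,U})^2$ inside every relevant pseudo-expectation, after which $0\le y_i^U\le 1$ and $\sum_{i\notin S}c_iy_i^U\le C-\sum_{i\in U}c_i$ drop out of the SoS capacity certificate together with the Boolean identities $x_i=x_i^2$ and $1-x_i=(1-x_i)^2$. So you have correctly identified both the overall route and the one delicate lemma; what is missing from your write-up is only this short square-substitution calculation.
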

\begin{proof}[Analog proof for probability distributions]
To give motivation for how we approach the problem with pseudo-distributions, let's first prove this for probability distributions $\mu$ satisfying the constraint $\sum_{i = 1}^n c_i x_i \leq C$. We want to show $\ex_{\mu}(\sum_{i = 1}^n v_i x_i)$ is not too large. Consider the expectation of $\sum_{i = 1}^n v_i x_i$ when $U \subseteq S$ is chosen; in particular, we have
\[ \ex_{\mu}\left(\sum_{i = 1}^n v_i x_i \right) = \sum_{U \subseteq S} \ex_{\mu}\left( \sum_{i = 1}^n v_i x_i \cdot f_{S, U}\right).\]
Next, we decompose the sum $\sum_{i = 1}^n$ into terms in $S$ and terms not in $S$. We notice that $\ex_{\mu}(f_{S, U}) = \pr(x_{|S} = U)$. Therefore, 
\begin{align*}
    \ex_{\mu}\left(\sum_{i = 1}^n v_i x_i \right) &= \sum_{U \subseteq S} \ex_{\mu}\left( \sum_{i = 1}^n v_i x_i \cdot f_{S, U}\right) \\
    &= \sum_{U \subseteq S} \left( \ex_{\mu}\left( \sum_{i \in S} v_i x_i \cdot f_{S, U} \right) + \ex_{\mu}\left( \sum_{i \not \in S} v_i x_i \cdot f_{S, U} \right) \right) \\
    &= \sum_{U \subseteq S} \left( \left( \sum_{i \in U}v_i \right) \ex_{\mu}(\cdot f_{S, U}) + \ex_{\mu}\left( \sum_{i \not \in S} v_i x_i \cdot f_{S, U} \right) \right)
\end{align*}
For the latter term, we recall that we defined $y_i^U$ to satisfy
\[ \ex_{\mu}(x_i \cdot f_{S, U}) = \ex_{\mu}(f_{S, U}) \cdot y_i^U. \]
Hence, 
\begin{align*}
    \ex_{\mu}\left(\sum_{i = 1}^n v_i x_i \right) &= \sum_{U \subseteq S} \left( \left( \sum_{i \in U}v_i \right) \ex_{\mu}(\cdot f_{S, U}) + \ex_{\mu}\left( \sum_{i \not \in S} v_i x_i \cdot f_{S, U} \right) \right) \\
    &= \sum_{U \subseteq S} \left( \left( \sum_{i \in U}v_i \right) \ex_{\mu}(\cdot f_{S, U}) + \sum_{i \not \in S} v_i y_i^U \cdot \ex_\mu(f_{S, U}) \right) \\
    &= \sum_{U \subseteq S} \ex_\mu(f_{S, U}) \cdot \left( \sum_{i \in U}v_i + \sum_{i \in S} v_i y_i^U \right)
\end{align*}
At this point, we recall that $y_i^U$ is between 0 and 1, and we have $\sum_{i \not \in S} y_i^U \leq C - \sum_{i \in U} c_i$. Therefore, by the linear programming bound on Knapsack, we have 
\[ \sum_{i \in S} v_i y_i^U \leq OPT\text{ restricted to items $U$ chosen} + \frac{OPT}{t - 1}. \]
So, in total, we get
\begin{align*}
    \ex_{\mu}\left(\sum_{i = 1}^n v_i x_i \right) &= \sum_{U \subseteq S} \ex_\mu(f_{S, U}) \cdot \left( \sum_{i \in U}v_i + \sum_{i \in S} v_i y_i^U \right) \\
    &\leq \sum_{U \subseteq S} \ex_\mu(f_{S, U}) \left( \sum_{i \in U}v_i + OPT(C - \sum_{i \in U} c_i) + \frac{OPT}{t - 1} \right) \\
    &= \sum_{U \subseteq S} \ex_\mu(f_{S, U}) OPT + \frac{OPT}{t - 1} \\
    &= \left( 1 + \frac{1}{t - 1} \right) OPT
\end{align*}
as desired. 
\end{proof}

Let $S := \{ i \mid v_i \geq \frac{OPT}{t - 1} \}$. Then, we clearly have that $\sum_{i \in S} x_i \geq t$ implies $\sum_{i = 1}^n c_i x_i > C$. Hence, we have the implication $\sum_{i = 1}^n c_i x_i \leq C$ implies $\sum_{i \in S} x_i \leq t-1$.

\begin{remark}\label{looks-like-mu}
We have that for any pseudo-distribution of degree at least $2t$ satisfying $\sum_{i = 1}^n c_i x_i \leq C$, it must hold that, for all $T \subseteq S$ where $t \leq |T| \leq 2t$, $\pex{D}(X_T) = 0$. In particular, we know that $D$ restricted to $S$ has a probability distribution $\mu$ which agrees with it up to $2t$ moments. Note that $S$ may be of size much larger than $2t$, so this is not just the local distribution property. 
\end{remark}

Let us call $D'$ as the distribution $D$ restricted to $S$. We define $f_{S, U}: \{0, 1\}^S \rightarrow \rr$ as 
\[ f_{S, U}(x) = \begin{cases} 1 & \text{if $x_i = 1$ for all $i \in U$ and $x_i = 0$ for all $i \in S \setminus U$.} \\
0 & \text{otherwise}
\end{cases}.
\]
In other words, $f_{S, U}$ is an indicator function over $ \{0, 1\}^S$. 

\begin{lemma} 
\label{sum-of-indicators}
For all $x \in \{0, 1\}^S$, 
\[ \sum_{U \subseteq S} f_{S, U}(x) = 1. \]
\end{lemma}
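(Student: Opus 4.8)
The plan is to observe that the functions $\{f_{S,U}\}_{U \subseteq S}$ are precisely the indicators of the singletons $\{x\}$ as $x$ ranges over $\{0,1\}^S$, reindexed by ``the set of coordinates that are $1$''. So for a fixed $x \in \{0,1\}^S$, exactly one term of the sum $\sum_{U \subseteq S} f_{S,U}(x)$ is nonzero, and it equals $1$.

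Concretely, I would fix $x \in \{0,1\}^S$ and set $U_x := \{ i \in S \mid x_i = 1 \}$. First I would check that $f_{S,U_x}(x) = 1$: by construction $x_i = 1$ for all $i \in U_x$, and for $i \in S \setminus U_x$ we have $x_i \neq 1$, hence $x_i = 0$ since $x \in \{0,1\}^S$; this is exactly the condition in the first case of the definition of $f_{S,U}$. Next I would check that $f_{S,U}(x) = 0$ for every $U \neq U_x$: if $U \neq U_x$ then either there is some $i \in U$ with $i \notin U_x$, so $x_i = 0 \neq 1$, violating the requirement ``$x_i = 1$ for all $i \in U$''; or there is some $i \in U_x$ with $i \notin U$, i.e. $i \in S \setminus U$ and $x_i = 1 \neq 0$, violating ``$x_i = 0$ for all $i \in S \setminus U$''. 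In either case we fall into the second case of the definition and $f_{S,U}(x) = 0$.

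Combining the two points, $\sum_{U \subseteq S} f_{S,U}(x) = f_{S,U_x}(x) + \sum_{U \neq U_x} f_{S,U}(x) = 1 + 0 = 1$, which is the claim.

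There is no real obstacle here; the only thing to be careful about is the case analysis showing $U_x$ is the \emph{unique} subset satisfying the defining conditions of $f_{S,U}(x) = 1$, which uses in an essential way that the ambient space is $\{0,1\}^S$ (so ``$x_i \neq 1$'' forces ``$x_i = 0$''). If desired, one could phrase the whole argument in one line: $f_{S,U}(x) = \mathbbm{1}[U = U_x]$, and summing a Kronecker-type indicator over all $U \subseteq S$ gives $1$.
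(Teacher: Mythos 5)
Your proof is correct, and it spells out exactly the ``trivial'' argument the paper has in mind: the $f_{S,U}$ form a partition of unity on $\{0,1\}^S$ because the unique $U$ contributing a $1$ to the sum is $U_x = \{i \in S \mid x_i = 1\}$. Nothing to add.
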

\begin{proof}
Trivial.
\end{proof}

\begin{lemma}
For all $T \subseteq S$ such that $|T| \leq 2t - 1$ and for all $i \not \in T$,
\[ \pex{D'}(X_T \cdot x_i) = 0. \]
\end{lemma}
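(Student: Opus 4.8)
The plan is to recognize $\pex{D'}(X_T \cdot x_i)$ as a single pseudo-moment of the original pseudo-distribution $D$ and then quote the vanishing of that moment from \Cref{looks-like-mu}. Note first that since $D'$ is a pseudo-distribution over $\{0,1\}^S$, the variable $x_i$ appearing in the statement forces $i \in S$, so we may take $i \in S \setminus T$.

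The first step I would carry out is to absorb $x_i$ into the monomial using idempotency of the $\{0,1\}$ variables: since $x_i^2 = x_i$ and $i \notin T$,
\[ X_T \cdot x_i = \left(\prod_{j \in T} x_j\right) x_i = \prod_{j \in T \cup \{i\}} x_j = X_{T \cup \{i\}}, \]
which is a multilinear monomial supported on $T \cup \{i\} \subseteq S$ of degree $|T| + 1 \le 2t$. The second step is to unwind the definition of the restriction $D'$: for any polynomial $g$ in the variables $\{x_j : j \in S\}$,
\[ \pex{D'}(g) = \sum_{y \in \{0,1\}^S} D'(y)\, g(y) = \sum_{y \in \{0,1\}^S} \left(\sum_{x \,:\, x_{|S} = y} D(x)\right) g(y) = \sum_{x \in \{0,1\}^n} D(x)\, g(x) = \pex{D}(g), \]
where the third equality collapses the double sum and uses that $g$ sees only the coordinates in $S$. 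Applying this with $g = X_{T \cup \{i\}}$ gives $\pex{D'}(X_T \cdot x_i) = \pex{D}(X_{T \cup \{i\}})$.

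The final step is to invoke \Cref{looks-like-mu}: $T \cup \{i\}$ is a subset of $S$, and its size $|T|+1 \le 2t$ lands in the window covered there (recall the knapsack constraint $\sum_i c_i x_i \le C$ forces $\sum_{i \in S} x_i \le t-1$, which is exactly what makes every monomial $X_W$ with $W \subseteq S$ in that size range have $\pex{D}(X_W) = 0$). Hence $\pex{D}(X_{T \cup \{i\}}) = 0$, and therefore $\pex{D'}(X_T \cdot x_i) = 0$. There is no real obstacle here — the entire content is the two small observations above (idempotency, which lets us rewrite $X_T x_i$ as $X_{T\cup\{i\}}$, and the fact that marginalizing a pseudo-distribution preserves pseudo-expectations of functions that only depend on the surviving coordinates), after which the claim is literally \Cref{looks-like-mu}. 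The one thing to watch is the degree bookkeeping, i.e.\ checking that $|T \cup \{i\}|$ stays in the range to which \Cref{looks-like-mu} applies.
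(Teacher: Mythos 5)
Your proof is correct (subject to the same hidden hypothesis as the paper's own proof; see below) and takes a genuinely different, and in fact more direct, route. The paper argues via Cauchy--Schwarz: it picks $T_1 \subseteq T$ of size exactly $t$, splits $X_T x_i = X_{T_1} \cdot (X_{T\setminus T_1} x_i)$, and bounds
\[ \pex{D'}(X_T x_i) \le \sqrt{\pex{D'}(X_{T_1}^2)}\,\sqrt{\pex{D'}(X_{T\setminus T_1}^2 x_i^2)} = 0 \]
because $\pex{D'}(X_{T_1}^2) = \pex{D'}(X_{T_1}) = 0$ (this is where the vanishing of degree-$t$ moments from Remark~\ref{looks-like-mu} enters). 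You instead notice that $X_T x_i$ is \emph{itself} a square-free monomial $X_{T\cup\{i\}}$ supported inside $S$, that restriction preserves pseudo-expectations of $S$-supported polynomials, and hence the quantity in question \emph{is} one of the pseudo-moments that Remark~\ref{looks-like-mu} declares to be zero. This avoids Cauchy--Schwarz entirely, gives equality in one shot (the paper's chain of inequalities nominally only shows $\le 0$ and leaves $\ge 0$ to the reader via $X_T x_i$ being a square), and is the cleaner way to see the claim.

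One caveat that applies equally to your proof and to the paper's, and which you gesture at but should make explicit: as literally stated, the lemma with only the upper bound $|T| \le 2t-1$ is false (take $T = \emptyset$; then $\pex{D'}(x_i) = \pex{D}(x_i)$ is generically nonzero). Your argument needs $|T \cup \{i\}| \ge t$, i.e.\ $|T| \ge t-1$, so that $T\cup\{i\}$ falls in the window $[t, 2t]$ covered by Remark~\ref{looks-like-mu}; the paper's argument needs $|T| \ge t$ just to choose $T_1 \subseteq T$ of size $t$. Both are fine in context because the lemma is only applied to the high-degree tails arising in Lemma~\ref{fact}, but it is worth recording the lower bound $|T| \ge t-1$ (your version is marginally more permissive than the paper's $|T| \ge t$).
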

\begin{proof}
Let $T_1 \subseteq T$ be a subset of size $t$. Then, by Cauchy-Schwartz
\begin{align*}
    \pex{D'}(X_T \cdot x_i) = \pex{D'}(X_{T_1} \cdot X_{T \setminus T_1} \cdot x_i) \\
    &\leq \sqrt{\pex{D'}(X_{T_1}^2)} \cdot \sqrt{\pex{D'}(X_{T \setminus T_1}^2 \cdot x_i^2)} \\
    &\leq 0 \cdot \sqrt{\pex{D'}(X_{T \setminus T_1}^2 \cdot x_i^2)} \\
    &= 0.
\end{align*}
Note we can apply Cauchy-Schwartz because $D'$ is of degree $2t$, and $\text{deg}(X_{T_1}) = t$ and $\text{deg}(X_{T \setminus T_1} \cdot x_i) \leq t$. 
\end{proof}

At this point, we want to reason about the behavior of $\pex{D}$ on $f$. However, $\pex{D}$ really only has well-defined behavior for polynomials of degree less than or equal to $2d$. The follow operation will help us relate the pseudo-expectation of high degree polynomial to a low-degree one.
\begin{defn}[Reduction of $f$]
For any polynomial $f = \sum_{T \subseteq S} \hat{f}_T X_T$ over $\{0, 1\}^S$, we define $\rd(f)$ as just the low-degree terms of $f$, i.e., 
\[ \rd(f) := \sum_{T \subseteq S \mid |T| < t} \hat{f}_T X_T. \]
\end{defn}

Recall that $f$ is an element of the vector space of functions (or really, polynomials) from $\{0, 1\}^S$ to $\rr$, and hence, $\rd(f)$ can be seen as simply projecting $f$ onto the subspace of polynomials of low-degree. In other words, $\rd(f)$ is a liner operator over polynomials. 

\begin{lemma} For all $x \in \{0, 1\}^S$, we have,
\[ \sum_{U \subseteq S} \rd(f_{S, U})(x) = 1. \]
\end{lemma}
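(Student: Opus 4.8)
The plan is to leverage the linearity of the reduction operator together with \Cref{sum-of-indicators}. First I would recall that every function from $\{0,1\}^S$ to $\rr$ has a \emph{unique} representation as a multilinear polynomial in the monomials $X_T$, $T \subseteq S$; this is the representation with respect to which $\rd$ is defined. By \Cref{sum-of-indicators}, the function $\sum_{U \subseteq S} f_{S,U}$ agrees with the constant function $1$ at every point of $\{0,1\}^S$, so by uniqueness of the multilinear representation the polynomial $\sum_{U \subseteq S} f_{S,U}$ \emph{is} the constant polynomial $1$; equivalently, its only nonzero Fourier-type coefficient is the one indexed by $T = \emptyset$, which equals $1$.

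Next I would apply the fact, already noted in the text, that $\rd$ is a linear operator on the space of polynomials over $\{0,1\}^S$. Hence
\[
\sum_{U \subseteq S} \rd(f_{S,U}) \;=\; \rd\!\left( \sum_{U \subseteq S} f_{S,U} \right) \;=\; \rd(1).
\]
Since $\rd(1)$ keeps exactly the terms $X_T$ with $|T| < t$, and the constant term corresponds to $T = \emptyset$ with $|\emptyset| = 0 < t$ (using $t \geq 1$, which holds since we are working with degree-$2t$ pseudo-distributions with $t \geq 2$), the term survives untouched and $\rd(1) = 1$. Therefore $\sum_{U \subseteq S} \rd(f_{S,U})(x) = 1$ for all $x \in \{0,1\}^S$, as claimed.

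There is essentially no hard step here: the only thing to be careful about is the passage from the pointwise identity of \Cref{sum-of-indicators} to an identity of polynomials in the monomial basis, which requires invoking uniqueness of the multilinear (Fourier) expansion over $\{0,1\}^S$ so that $\rd$ — defined coefficient-wise — can be applied unambiguously. Once that is in place, linearity of $\rd$ and the triviality $\rd(1)=1$ finish the argument immediately.
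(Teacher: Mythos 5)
Your proof is correct and follows essentially the same route as the paper: apply linearity of $\rd$ to the identity $\sum_{U \subseteq S} f_{S,U} = 1$ from \Cref{sum-of-indicators} and note $\rd(1)=1$. The only addition is that you explicitly justify promoting the pointwise identity to an identity of multilinear polynomials (so that applying the coefficient-wise operator $\rd$ is legitimate), which the paper leaves implicit.
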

\begin{proof}
Recall that \Cref{sum-of-indicators} asserts $\sum_{U \subseteq S} f_{S, U}(x) = 1$. Since $\rd$ is a linear operator on polynomials, we have 
\[ \rd \left(\sum_{U \subseteq S} f_{S, U}(x)\right) = \sum_{U \subseteq S} \rd(f_{S, U}(x)). \]
Also, clearly, 
\[ \rd(1) = 1. \]
So, we have 
\[ 1 = \rd \left(\sum_{U \subseteq S} f_{S, U}(x)\right) = \sum_{U \subseteq S} \rd(f_{S, U}(x)). \]
\end{proof}

\begin{lemma} 
\label{red-x-i}
For all $i \in S$
\[ \sum_{U \subseteq S} \rd(f_{S,U}) \cdot x_i = x_i. \]
\end{lemma}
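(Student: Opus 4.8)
The plan is to derive this directly from the immediately preceding lemma, which asserts that $\sum_{U \subseteq S} \rd(f_{S,U}) = 1$ as functions on $\{0,1\}^S$ (equivalently, as multilinear polynomials, after reducing $x_j^2 \to x_j$). Since this is an identity of functions on the Boolean cube, I would multiply both sides by the coordinate function $x_i$ and use distributivity of multiplication over addition to move $x_i$ inside the sum, obtaining $\sum_{U \subseteq S} \rd(f_{S,U}) \cdot x_i = 1 \cdot x_i = x_i$. That is essentially the whole argument.

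It is worth flagging one temptation to avoid: one should \emph{not} instead start from $\sum_{U \subseteq S} f_{S,U} = 1$ (\Cref{sum-of-indicators}), multiply by $x_i$, and then apply $\rd$. The operator $\rd$ is linear on polynomials but does not commute with multiplication by $x_i$ (truncating to degree $< t$ and then multiplying by $x_i$ is not the same as multiplying first and then truncating), so that route would not immediately give the claimed identity. Using the already-reduced identity from the previous lemma sidesteps this issue cleanly.

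There is essentially no obstacle here; the only point requiring (minimal) care is conceptual — the equality in the statement is an equality of functions on $\{0,1\}^S$, so $\rd(f_{S,U}) \cdot x_i$ is allowed to contain monomials of degree up to $t$ (these all cancel in the sum except for $x_i$ itself), and it is not a statement about pseudo-expectations, so no degree bookkeeping is needed. The lemma is being set up for later use, where it will be combined with pseudo-expectation facts such as $\pex{D'}(X_T \cdot x_i) = 0$ for $|T| \le 2t-1$ and $i \notin T$ to control $\pex{D'}$ applied to expressions like $\sum_i v_i x_i \cdot \rd(f_{S,U})$.
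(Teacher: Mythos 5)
Your proof is correct and is essentially identical to the paper's: both start from the previous lemma's identity $\sum_{U \subseteq S} \rd(f_{S,U}) = 1$, multiply by $x_i$, and distribute. Your cautionary remark that $\rd$ does not commute with multiplication by $x_i$ is a nice extra observation, but the core argument matches the paper's one-line proof.
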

\begin{proof}
\begin{align*}
    x_i &= 1 \cdot x_i \\
    &= \left( \sum_{U \subseteq S} \rd(f_{S, U}) \right) \cdot x_i \\
    &= \sum_{U \subseteq S} \rd(f_{S, U}) \cdot x_i.
\end{align*}
\end{proof}

\begin{defn}[``Conditional Expectation"]
We define, for a given $i \in [n]$
\[ y_i^U := \frac{\pex{D}(x_i \cdot \rd(f_{S, U}))}{\pex{D}(\rd(f_{S, U}))}. \]
\end{defn}

We state several key facts we'll need about $y_i^U$, and then use these facts to finally present a proof of the main theorem. 

\begin{lemma}
\label{y}
For all $i \in [n]$,
\begin{enumerate}
    \item[(a)] 
    \[ 0 \leq y_i^U \leq 1 \] 
    and 
    \item[(b)] \[ \sum_{i \not \in S}y_i^U c_i \leq C - \sum_{i \in U} c_i.\]
\end{enumerate}
\end{lemma}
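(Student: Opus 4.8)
The plan is to reduce both parts to two elementary observations. \emph{Observation 1:} $\rd(f_{S,U})^2-\rd(f_{S,U})$ is a linear combination of monomials $X_T$ with $T\subseteq S$ and $t\leq |T|\leq 2(t-1)$. Indeed $f_{S,U}$ is $\{0,1\}$-valued, hence idempotent as a multilinear polynomial, so writing $f_{S,U}=\rd(f_{S,U})+h$ with $h$ the part (all monomials of degree $\geq t$) deleted by $\rd$, the identity $\rd(f_{S,U})^2=f_{S,U}-\mathrm{ML}(2\rd(f_{S,U})h+h^2)$ holds as multilinear polynomials; since multiplying any monomial of degree $\geq t$ by anything yields only monomials of degree $\geq t$, applying $\rd$ to this identity gives $\rd(\rd(f_{S,U})^2)=\rd(f_{S,U})$, i.e.\ the difference $\rd(f_{S,U})^2-\rd(f_{S,U})$ has no low-degree part, and its total degree is $\leq 2(t-1)$. \emph{Observation 2:} if $T\subseteq S$ with $t\leq |T|\leq 2(t-1)$ and $i\notin T$, then $\pex{D}(X_T)=0=\pex{D}(x_iX_T)$. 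For this, split $T=T_0\sqcup T_1$ with $|T_0|=t$; since $\deg(x_iX_{T_1})\leq |T_1|+1\leq t-1$, Cauchy--Schwarz gives $|\pex{D}(X_{T_0}\cdot x_iX_{T_1})|\leq \sqrt{\pex{D}(X_{T_0}^2)}\,\sqrt{\pex{D}((x_iX_{T_1})^2)}=0$, because $\pex{D}(X_{T_0}^2)=\pex{D}(X_{T_0})=0$ by \Cref{looks-like-mu} (as $|T_0|=t$, $T_0\subseteq S$); the same with $X_{T_1}$ in place of $x_iX_{T_1}$ gives $\pex{D}(X_T)=0$, and all polynomials involved have degree $\leq 2t$.

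Next I would record, from \Cref{looks-like-mu}, the honest probability distribution $\mu$ on $\{0,1\}^S$ agreeing with $D$ on every monomial of degree $\leq 2t$ supported on $S$. Since $\pex{D}(X_T)=0$ for every $t$-subset $T\subseteq S$ and $\mu$ is a nonnegative measure, $\mu$ is supported on $Y:=\{y\in\{0,1\}^S:\sum_{i\in S}y_i\leq t-1\}$, and on $Y$ every monomial of degree $\geq t$ vanishes, so $\rd(f_{S,U})=f_{S,U}$ and $\rd(f_{S,U})^2=f_{S,U}$ on $Y$. Hence the denominator $p_U:=\pex{D}(\rd(f_{S,U}))$, the pseudo-expectation of a degree $\leq t-1$ polynomial in the $S$-variables, equals $\ex_\mu(\rd(f_{S,U}))=\ex_\mu(f_{S,U})=\pr_{y\sim\mu}(y=U)\geq 0$, and I only treat $U$ with $p_U>0$. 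For (a) with $i\in S$, $x_i\rd(f_{S,U})$ is a degree $\leq t$ polynomial in the $S$-variables, so $\pex{D}(x_i\rd(f_{S,U}))=\ex_\mu(x_if_{S,U})=\mathbbm{1}[i\in U]\,p_U$ and $y_i^U=\mathbbm{1}[i\in U]\in\{0,1\}$. For (a) with $i\notin S$, I write $x_i\rd(f_{S,U})=x_i\rd(f_{S,U})^2+x_i\big(\rd(f_{S,U})-\rd(f_{S,U})^2\big)$; the second term has pseudo-expectation $0$ by Observations 1--2, while $x_i\rd(f_{S,U})^2=(x_i\rd(f_{S,U}))^2$ on the cube is a square of a degree $\leq t$ polynomial, so $\pex{D}(x_i\rd(f_{S,U}))\geq 0$. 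Running the same argument with $1-x_i$ (idempotent on the cube) in place of $x_i$ gives $\pex{D}((1-x_i)\rd(f_{S,U}))\geq 0$, i.e.\ $\pex{D}(x_i\rd(f_{S,U}))\leq p_U$; dividing by $p_U$ proves $0\leq y_i^U\leq 1$.

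For (b), after clearing the positive denominator it suffices to prove $\pex{D}\big(\rd(f_{S,U})\,[\,C-\sum_{i\in U}c_i-\sum_{i\notin S}c_ix_i\,]\big)\geq 0$. I split $\rd(f_{S,U})=\rd(f_{S,U})^2+\big(\rd(f_{S,U})-\rd(f_{S,U})^2\big)$; the cross term vanishes exactly as before, since the bracket is affine and Cauchy--Schwarz against a $t$-subset of each monomial $X_T$ (with $T\subseteq S$, $t\leq|T|\leq 2(t-1)$) kills every resulting pseudo-expectation. For the main term I rewrite the bracket as $\big(C-\sum_{i\in[n]}c_ix_i\big)+\big(\sum_{i\in S}c_ix_i-\sum_{i\in U}c_i\big)$. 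Against the first summand, $\pex{D}\big(\rd(f_{S,U})^2(C-\sum_i c_ix_i)\big)\geq 0$ because $D$ satisfies $\sum_i c_ix_i\leq C$ and $\rd(f_{S,U})$ has degree $\leq t-1$. Against the second summand, the product is a degree $\leq 2t$ polynomial in the $S$-variables, so its pseudo-expectation is $\ex_\mu\big(\rd(f_{S,U})^2(\sum_{i\in S}c_ix_i-\sum_{i\in U}c_i)\big)=\ex_\mu\big(f_{S,U}(\sum_{i\in S}c_ix_i-\sum_{i\in U}c_i)\big)=0$, the last equality because $f_{S,U}\cdot(\sum_{i\in S}c_ix_i-\sum_{i\in U}c_i)$ is the zero function on $\{0,1\}^S$ (it is $0$ off the point $y=U$ and equals $0$ at $y=U$). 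Summing the two contributions gives (b).

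The step I expect to be the real obstacle is (a) for $i\notin S$ (and the identical manoeuvre inside (b)): there one cannot fall back on the honest distribution $\mu$, which only controls the $S$-variables, and $x_i\rd(f_{S,U})$ is neither a square nor pointwise nonnegative. The point is to trade $\rd(f_{S,U})$ for its square modulo the high-degree monomials $X_T$ ($T\subseteq S$, $|T|\geq t$), whose pseudo-expectations against $x_i$, against $1-x_i$, and against the affine knapsack slack all vanish by Cauchy--Schwarz together with $\pex{D}(X_{T_0})=0$ for $t$-subsets $T_0\subseteq S$. Everything else is bookkeeping; the one thing requiring attention is checking, in each invocation of Cauchy--Schwarz, that both factors have degree $\leq t$ so that we stay inside the degree-$2t$ guarantee, which holds thanks to $\deg\rd(f_{S,U})\leq t-1$ and $|T|\leq 2(t-1)$.
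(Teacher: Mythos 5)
Your proof is correct and, modulo presentation, follows the same route as the paper's. Your Observation~1 (that $\rd(f_{S,U})^2-\rd(f_{S,U})$ is supported only on monomials $X_T$ with $T\subseteq S$ and $t\le|T|\le 2(t-1)$) is exactly the content of the paper's \Cref{fact}, your Observation~2 is the paper's lemma that $\pex{D'}(X_T\cdot x_i)=0$ via Cauchy--Schwarz against a $t$-subset, and the core manoeuvres -- trading $\rd(f_{S,U})$ for $\rd(f_{S,U})^2$ up to a pseudo-expectation-zero error, then using idempotence of $x_i$ and $1-x_i$ for part~(a), and pairing $\rd(f_{S,U})^2$ against the knapsack slack $C-\sum_i c_ix_i$ together with the $\mu$-agreement on $S$-variables for part~(b) -- are the same as in the paper. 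You are a bit more careful than the paper in a few spots (you explicitly restrict to $U$ with $p_U>0$ rather than dividing by a possibly-zero denominator, you track the degree bounds $\deg\rd(f_{S,U})\le t-1$ and $|T|\le 2(t-1)$ explicitly in each Cauchy--Schwarz application, and you handle $i\in S$ via $\mu$ directly rather than through the square trick), and your part~(b) rearranges the same two inequalities the paper labels (1) and (3) by splitting the bracket as $(C-\sum_{[n]}c_ix_i)+(\sum_S c_ix_i-\sum_U c_i)$ instead of subtracting afterwards; these are cosmetic differences, not a different argument.
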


\begin{proof}[Real proof of \Cref{main-thm}]
Take a pseudo-distribution $D$ of degree $2t$ which satisfies the constraint $\sum_{i = 1}^n c_i x_i \leq C$. We want to analyze $\pex{D}\left( \sum_{i = 1}^n v_i x_i \right)$. Recall we define $S$ as the set of high value items -- $S := \{ i \mid v_i \geq \frac{OPT}{t - 1} \}$

First, we use \Cref{red-x-i} to deduce
\begin{align*}
\pex{D}\left( \sum_{i = 1}^n v_i x_i \right) &= \sum_{i = 1}^n v_i  \pex{D}\left(x_i \right) \\
&= \sum_{i = 1}^n v_i \pex{D}\left( \sum_{U \subseteq S} \rd(f_{S, U}) x_i \right) \\
&= \sum_{i = 1}^n \sum_{U \subseteq S} v_i \pex{D}(\rd(f_{S, U}) x_i) \\
&= \sum_{U \subseteq S} \pex{D} \left( \sum_{i = 1}^n v_i x_i \rd(f_{S, U})\right)
\end{align*}
Next, we partition the inner sum into terms in $S$ and terms not in $S$ --  
\begin{align*}
    \pex{D}\left( \sum_{i = 1}^n v_i x_i \right) &= \sum_{U \subseteq S} \pex{D} \left( \sum_{i = 1}^n v_i x_i \rd(f_{S, U})\right) \\
    &= \sum_{U \subseteq S} \pex{D} \left( \sum_{i \in S} v_i x_i \rd(f_{S, U}) + \sum_{i \not \in S} v_i x_i \rd(f_{S, U}) \right) \\
    &= \sum_{U \subseteq S} \pex{D} \left( \sum_{i \in S} v_i x_i \rd(f_{S, U}) + \sum_{i \not \in S} v_i x_i \rd(f_{S, U})\right) \\
    &= \sum_{U \subseteq S} \pex{D} \left( \sum_{i \in S} v_i x_i \rd(f_{S, U}) \right) + \sum_{U \subseteq S} \pex{D} \left( \sum_{i \not \in S} v_i x_i \rd(f_{S, U}) \right).
\end{align*}
For the first term, we note that, restricted to $S$, $D$ has a corresponding global distribution which agrees with $D$ for low-degree terms. Also, note that $\rd{f_{S, U}}$ takes only takes on values $0$ and 1. Hence, for a given $U \subseteq S$, 
\[ \pex{D} \left( \sum_{i \in S} v_i x_i \rd(f_{S, U}) \right) = \left( \sum_{i \in U} v_i \right) \pex{D}(\rd(f_{S, U})). \]
For the latter term, we use the definition of $y_i^U$. Recall, 
\[ \pex{D}(x_i \cdot f_{S, U}) =  y_i^U \cdot \pex{D}(f_{S, U}). \]
Hence, for a given $U \subseteq S$, we have 
\[ \pex{D} \left( \sum_{i \not \in S} v_i x_i \rd(f_{S, U}) \right) = \pex{D}(\rd(f_{S, U}) \cdot \sum_{i \not \in S} v_i y_i^U. \]
In total, we have 
\begin{align*}
    \pex{D}\left( \sum_{i = 1}^n v_i x_i \right) &= \sum_{U \subseteq S} \pex{D} \left( \sum_{i \in S} v_i x_i \rd(f_{S, U}) \right) + \sum_{U \subseteq S} \pex{D} \left( \sum_{i \not \in S} v_i x_i \rd(f_{S, U}) \right) \\
    &= \sum_{U \subseteq S} \left( \sum_{i \in U} v_i \right) \pex{D}(\rd(f_{S, U})) + \sum_{U \subseteq S} \pex{D}(\rd(f_{S, U}) \cdot \sum_{i \not \in S} v_i y_i^U.
\end{align*}

At this point, we note that the term $\sum_{i \not \in S} v_i y_i^{U}$ can be upper bounded by the linear program 
\begin{align*}
    \max \sum_{i \not \in S} y_i v_i  \\
    \sum_{i \not \in S} y_i c_i \leq C \\
    0 \leq y_i \leq 1
\end{align*}
which we know will have value at most $OPT'$ + $OPT/t$ where $OPT'$ the the optimal value given capacity $C - \sum_{i \in U} c_i$. So, 
\begin{align*}
    \pex{D}\left( \sum_{i = 1}^n v_i x_i \right) &= \sum_{U \subseteq S} \left( \sum_{i \in U} v_i \right) \pex{D}(\rd(f_{S, U})) + \sum_{U \subseteq S} \pex{D}(\rd(f_{S, U}) \cdot \sum_{i \not \in S} v_i y_i^U \\
    &\leq \sum_{U \subseteq S} \pxd(\rd(f_{S, U}))\left( \sum_{i \in U} v_i + OPT' + OPT/t \right) \\
    &\leq \left(1 + \frac{1}{t} \right) OPT \sum_{U \subseteq S} \pxd(\rd(f_{S, U}) \\
    &= \left(1 + \frac{1}{t} \right) OPT.
\end{align*}

\end{proof}

\subsection{The Lemmas}

We use the following fact in our analysis: 
\begin{lemma}\label{fact}
For all $U \subseteq S$, 
\[ \pex{D}(\rd(f_{S, U}^2 x_i)) = \pex{D}(\rd(f_{S, U})^2) \]
\end{lemma}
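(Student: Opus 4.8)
The plan is to peel off the indicator structure and then reduce everything to the single fact recorded in \Cref{looks-like-mu}, namely that $\pex{D}(X_T) = 0$ whenever $T \subseteq S$ and $t \le |T| \le 2t$. Recall that $\rd$ is applied to the multilinear representation of its input. Since $f_{S,U}$ is $\{0,1\}$-valued, $f_{S,U}^2$ multilinearizes to $f_{S,U}$, and in the regime the lemma is used, $i \in U$, we also have $x_i f_{S,U} = f_{S,U}$ (on the support of $f_{S,U}$ the coordinate $x_i$ is $1$, and off the support both sides vanish). Hence $f_{S,U}^2 x_i$ and $f_{S,U}$ have the same multilinear representation, so $\rd(f_{S,U}^2 x_i) = \rd(f_{S,U})$ and the left-hand side is $\pex{D}(\rd(f_{S,U}))$. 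Writing $h := \rd(f_{S,U})$, it remains to show $\pex{D}(h) = \pex{D}(h^2)$.

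For this, set $g := f_{S,U} - h$, so $g$ is the sum of the monomials $X_T$ of $f_{S,U}$ with $T \subseteq S$ and $|T| \ge t$. Expanding the idempotency relation $f_{S,U}^2 = f_{S,U}$ as $(h+g)^2 = h+g$ and rearranging gives, as multilinear polynomials over $\{0,1\}^S$,
\[ h^2 - h = g - 2hg - g^2 . \]
Every monomial occurring in $g$, in $hg$, or in $g^2$ has degree at least $t$, since multiplying $X_T$ with $|T|\ge t$ by any $X_{T'}$ gives $X_{T\cup T'}$, still of degree $\ge t$. Hence every monomial of $h^2-h$ has degree $\ge t$, while $\deg(h^2-h) \le 2t-2$ because $\deg h \le t-1$. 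So $h^2-h$ is supported on monomials $X_T$ with $T\subseteq S$ and $t \le |T| \le 2t-2$, and applying \Cref{looks-like-mu} term by term gives $\pex{D}(h^2-h) = 0$, i.e.\ $\pex{D}(h) = \pex{D}(h^2)$, as needed.

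One can repackage this through the probability distribution $\mu$ on $\{0,1\}^S$ from \Cref{looks-like-mu}: it agrees with $\pex{D}$ on all moments of degree $\le 2t$ and (by its construction) has every moment $X_T$ with $T\subseteq S$, $|T|\ge t$ equal to $0$, whence $\pex{D}(\rd(f_{S,U})) = \ex_\mu(f_{S,U} - g) = \ex_\mu(f_{S,U})$ and similarly $\pex{D}(\rd(f_{S,U})^2) = \ex_\mu((f_{S,U}-g)^2) = \ex_\mu(f_{S,U})$, the $g$-terms dropping because they only involve degree-$\ge t$ moments.

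The step I expect to need the most care is the degree bookkeeping. On its own $g$ can have degree as large as $|S| \gg 2t$, so $\pex{D}(g)$, $\pex{D}(hg)$, $\pex{D}(g^2)$ need not even be defined, and one must not split $\pex{D}(h^2-h)$ along $g - 2hg - g^2$ naively; the point is that we only ever evaluate $\pex{D}$ on the reduced polynomial $h^2-h$, whose degree is at most $2t-2$, and the identity is used purely to certify that all its monomials have degree $\ge t$ — exactly the range covered by \Cref{looks-like-mu}. I also note that the displayed reduction of the left-hand side used $i \in U$; this is the case in which the lemma is applied, and for $i\notin U$ the left-hand side is $0$ while the right-hand side is $\ex_\mu(f_{S,U})$, so equality there holds only when that quantity vanishes.
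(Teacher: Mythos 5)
Your core argument is the same as the paper's, and in one respect it is cleaner: the paper simply asserts that $\rd(f_{S,U})^2 = \rd(f_{S,U}^2) + g$ with $g$ ``of degree at least $t$'', whereas your expansion $h^2 - h = g - 2hg - g^2$ (with $g = f_{S,U} - h$) actually \emph{derives} that fact from the idempotence $f_{S,U}^2 = f_{S,U}$ and the closure of degree-$\geq t$ multilinear monomials under multiplication. That bookkeeping, together with the degree bound $\deg(h^2-h)\leq 2t-2$ and Remark~\ref{looks-like-mu}, is exactly the mechanism the paper relies on. So the $x_i$-free conclusion $\pex{D}(\rd(f_{S,U})) = \pex{D}(\rd(f_{S,U})^2)$ is correct and is proved the same way.

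Where you go astray is in your handling of $x_i$. You argue that $\rd(f_{S,U}^2 x_i) = \rd(f_{S,U})$ by taking $i\in U$, claiming ``this is the case in which the lemma is applied.'' It isn't: in the proof of Lemma~\ref{y} and in the final bound, the index $i$ ranges over $i\notin S$ (so $i\notin U$), since $y_i^U$ there is summed only over $i\notin S$. For such $i$ you cannot drop $x_i$, and the right reading of the (admittedly garbled) statement is the identity $\pex{D}(\rd(f_{S,U})\, x_i) = \pex{D}(\rd(f_{S,U})^2\, x_i)$, which the paper proves by carrying the factor $x_i$ through the whole argument: starting from $\rd(f_{S,U})^2 = \rd(f_{S,U}) + (h^2-h)$, multiply by $x_i$ and take $\pex{D}$, and then use the unnamed Cauchy--Schwarz lemma preceding the definition of $\rd$ (namely $\pex{D}(X_T\, x_i)=0$ for $T\subseteq S$, $|T|\geq t$, $i\notin T$) rather than Remark~\ref{looks-like-mu} itself to kill the error term. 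Your own decomposition extends verbatim to this version --- every monomial of $(h^2-h)x_i$ is $X_T x_i$ with $T\subseteq S$, $t\leq |T|\leq 2t-2$, and each such pseudo-moment vanishes by that lemma --- so you have all the pieces; you just should not have collapsed $x_i$ away.
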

\begin{proof}
For a given $U \subseteq S$ and $i \in [n]$, we have 
\[ f_{S, U}^2  = f_{S, U} . \]
since $f$ is an indicator function taking on values 0 and 1. Applying $\rd$ to both sides, we get
\[ \rd(f_{S, U}^2) = \rd(f_{S, U}) \]
which implies 
\[ \rd(f_{S, U}^2)x_i = \rd(f_{S, U}) x_i. \]
To reason about $\pex{D}(\rd(f_{S, U})^2)$, we note
\[ \rd(f_{S, U})^2 = \rd(f_{S, U}^2) + g \]
where $g$ is a polynomial of degree at least $t$. We know that $\pex{D}(X_T \cdot x_i) = 0$ for all $T \subseteq S$ of size at least $t$. Thus, $\pex{D}(g \cdot x_i) = 0$. This proves our claim. 
\end{proof}
We proceed to prove \Cref{y}. 
\begin{proof}[Proof of \Cref{y}]
First, we prove that $y_i^U \leq 1$. Recall 
\[ y_i^U = \frac{\pex{D}(x_i \cdot \rd(f_{S, U}))}{\pex{D}(\rd(f_{S, U})}. \]
Our first case will be if $\pxd(\rd(f_{S, U})) = 0$. In this case, by Cauchy-Schwartz (note that $\rd{f_{S, U}}$ and $x_i$ are both polynomials of degree at most $t$), we have  
\begin{align*}
    \pex{D}(x_i \cdot \rd(f_{S, U}))) &\leq \sqrt{\pex{D}(\rd(f_{S, U})^2)} \sqrt{\pex{D}(x_i^2)} \\
    &= \sqrt{\pex{D}(\rd(f_{S, U}))} \cdot \sqrt{\pex{D}(x_i^2)}  \\
    &= 0 \cdot \sqrt{\pex{D}(x_i^2)} \\
    &= 0.
\end{align*}
Hence, we consider the case where $\pex{D}(\rd(f_{S, U})$ is not zero. Here, we have 
\[ y_i^U = \frac{\pex{D}(x_i \cdot \rd(f_{S, U}))}{\pex{D}(\rd(f_{S, U}))} = \frac{\pex{D}(x_i \cdot \rd(f_{S, U})^2)}{\pex{D}(\rd(f_{S, U}))}. \]
Note that since $(1 - x_i) = (1 - x_i)^2$ (since $x_i$ takes on values 0 or 1), we have 
\begin{align*}
     \pxd(\rd(f_{S, U})^2(1 - x_i)) &= \pxd(\rd(f_{S, U})^2(1 - 2x_i + x_i^2)) \\
     &= \pxd(\rd(f_{S, U})^2(1 - x_i)^2) \\
     &\geq 0.
\end{align*}
Therefore, $\pex{D}(x_i \cdot \rd(f_{S, U})) = \pex{D}(x_i \cdot \rd(f_{S, U})^2) \leq \pex{D}(\rd(f_{S, U}))$, which implies $y_i^U \leq 1$. 
\[\]
Next, we prove $y_i^U \geq 0$. The denominator satisfies 
\[ \pxd(\rd(f_{S, U})) =  \pxd(\rd(f_{S, U})^2) \geq 0\]
and the numerator satisfies (since it is a product of indicator functions) 
\[ \pxd(\rd(f_{S, U}) x_i ) =  \pxd(\rd(f_{S, U})^2 x_i^2) \geq 0.\]
\[\]
Lastly, it remains to prove 
 \[ \sum_{i \not \in S}y_i^U c_i \leq C - \sum_{i \in U} c_i.\]
 Note that by the definition of $\rd(f_{S, U})$ and Remark \ref{looks-like-mu} ,
\begin{align}
    \pxd\left( \sum_{i \in S} x_i c_i \rd(f_{S, U})\right) &= \pex{\mu} \left( \sum_{i \in S } x_i c_i f_{S, U} \right) \\
    &= \sum_{i \in U} c_i \pxd(\rd(f_{S, U})).
\end{align}
On the other hand, 
\[ \pxd\left( \sum_{i \in [n]} x_i c_i \rd(f_{S, U}) \right) = \pxd\left( \sum_{i \in [n]} x_i c_i \rd(f_{S, U})^2 \right). \]
Since $\rd(f_{S, U})^2$ is a square of a degree $t - 1$ polynomial and $D$ is a pseudo-distribution satisfying the appropriate constraints, we have 
\[ \pxd\left(\rd(f_{S, U})^2 \left(C - \sum_{i \in [n]} c_i x_i \right)\right) \geq 0. \]
In particular, we have 
\begin{align}
    C \cdot \pxd(\rd(f_{S, U})) = C \cdot \pxd(\rd(f_{S, U})^2) \geq \pxd \left(\sum_{i \in [n]} c_i x_i \rd(f_{S, U}) \right).
\end{align} 
Subtracting (1) from (3), we get
\[ \pxd\left( \sum_{i \not \in S} \rd(f_{S, U}) x_i c_i \right) \leq C \cdot \pxd(f_{S, U}) - \sum_{i \in U} c_i \pxd(\rd(f_{S, U})). \]
This gives us our desired result.
\end{proof}

\bibliographystyle{alpha}
\bibliography{refs}

\begin{thebibliography}{KMN11}

\bibitem[FKP19]{FKP19}
Noah Fleming, Pravesh Kothari, and Toniann Pitassi.
\newblock Semialgebraic proofs and efficient algorithm design.
\newblock {\em Foundations and Trends Series}, {TR19-106}, 2019.

\bibitem[KMN11]{KMN}
Anna~R. Karlin, Claire Mathieu, and C.~Thach Nguyen.
\newblock Integrality gaps of linear and semi-definite programming relaxations
  for knapsack.
\newblock In Oktay G{\"{u}}nl{\"{u}}k and Gerhard~J. Woeginger, editors, {\em
  Integer Programming and Combinatoral Optimization - 15th International
  Conference, {IPCO} 2011, New York, NY, USA, June 15-17, 2011. Proceedings},
  volume 6655 of {\em Lecture Notes in Computer Science}, pages 301--314.
  Springer, 2011.

\bibitem[Kota]{notes-lec}
Pravesh Kothari.
\newblock \url{https://www.diderot.one/courses/58}.

\bibitem[Kotb]{videos-lec}
Pravesh Kothari.
\newblock
  \url{https://www.youtube.com/watch?v=tMWZw9f3J48&ab_channel=PraveshKothari}.

\end{thebibliography}

\end{document}